  \providecommand\BibTeX{{%
    \normalfont B\kern-0.5em{\scshape i\kern-0.25em b}\kern-0.8em\TeX}}}
\renewenvironment{proof}{{\bfseries Proof.}}{\qed}
\algnewcommand\algorithmicforeach{\textbf{for each:}}
\algnewcommand\ForEach{\item[ \algorithmicforeach]}
  \providecommand\BibTeX{{%
    \normalfont B\kern-0.5em{\scshape i\kern-0.25em b}\kern-0.8em\TeX}}}
\begin{document}

\title{Towards a Tighter Bound on Possible-Rendezvous Area}

\title{Towards a Tighter Bound on Possible-Rendezvous Areas: Preliminary Results}

\author{Arun Sharma}
\email{{sharm485@umn.edu}}
\affiliation{%
  \institution{University of Minnesota, Twin Cities}
  \city{Minneapolis}
  \state{Minnesota}
  \country{USA}
  \postcode{43017-6221}
}

\author{Jayant Gupta}
\email{gupta423@umn.edu}
\affiliation{%
  \institution{University of Minnesota, Twin Cities}
  \city{Minneapolis}
  \state{Minnesota}
  \country{USA}
  \postcode{43017-6221}
}

\author{Subhankar Ghosh}
\email{ghosh117@umn.edu}
\affiliation{%
  \institution{University of Minnesota, Twin Cities}
  \city{Minneapolis}
  \state{Minnesota}
  \country{USA}
  \postcode{43017-6221}
}


\renewcommand{\shortauthors}{Arun Sharma, Jayant Gupta and Subhankar Ghosh}

\begin{abstract}
Given trajectories with gaps, we investigate methods to tighten spatial bounds on areas (e.g., nodes in a spatial network) where possible rendezvous activity could have occurred. The problem is important for reducing the onerous amount of manual effort to post-process possible rendezvous areas using satellite imagery and has many societal applications to improve public safety, security, and health. The problem of rendezvous detection is challenging due to the difficulty of interpreting missing data within a trajectory gap and the very high cost of detecting gaps in such a large volume of location data. Most recent literature presents formal models, namely space-time prism, to track an object's rendezvous patterns within trajectory gaps on a spatial network. However, the bounds derived from the space-time prism are rather loose, resulting in unnecessarily extensive post-processing manual effort. To address these limitations, we propose a Time Slicing-based Gap-Aware Rendezvous Detection (TGARD) algorithm to tighten the spatial bounds in spatial networks. We propose a Dual Convergence TGARD (DC-TGARD) algorithm to improve computational efficiency using a bi-directional pruning approach. Theoretical results show the proposed spatial bounds on the area of possible rendezvous are tighter than that from related work (space-time prism). Experimental results on synthetic and real-world spatial networks (e.g., road networks) show that the proposed DC-TGARD is more scalable than the TGARD algorithm.

\end{abstract}
\begin{CCSXML}
<ccs2012>
 <concept>
  <concept_id>10010520.10010553.10010562</concept_id>
  <concept_desc>Computer systems organization~Embedded systems</concept_desc>
  <concept_significance>500</concept_significance>
 </concept>
 <concept>
  <concept_id>10010520.10010575.10010755</concept_id>
  <concept_desc>Computer systems organization~Redundancy</concept_desc>
  <concept_significance>300</concept_significance>
 </concept>
 <concept>
  <concept_id>10010520.10010553.10010554</concept_id>
  <concept_desc>Computer systems organization~Robotics</concept_desc>
  <concept_significance>100</concept_significance>
 </concept>
 <concept>
  <concept_id>10003033.10003083.10003095</concept_id>
  <concept_desc>Networks~Network reliability</concept_desc>
  <concept_significance>100</concept_significance>
 </concept>
</ccs2012>
\end{CCSXML}

\ccsdesc[500]{Information Systems}
\ccsdesc[300]{Geographic Information Systems}
\ccsdesc{Computing Methodologies}
\ccsdesc[100]{Spatial and Physical Reasoning}

\keywords{Spatio-Temporal Data Analysis, Trajectory Data Mining, Spatial Modeling and Reasoning}


\maketitle

\section{Introduction}
Given trajectories with gaps (i.e., missing data), we investigate methods to tighten bounds on the spatial networks (e.g., road network, river network, etc.) for detecting potential rendezvous or meetup locations. Figure \ref{fig:ProblemStatmeent} (a) shows a pair of trajectory gaps on an underlying spatial network topology where a gap exists from $t=2$ to $t=6$ for both trajectory $1$ (blue) and trajectory $2$ (red) with a given object speed at $1$ unit/second. Current approaches output possible meeting locations of two objects via the intersection of two space-time prisms. For instance, Figure \ref{fig:ProblemStatmeent} (b) shows six interpolated nodes (green) that qualify as rendezvous locations for both objects. Figure \ref{fig:ProblemStatmeent}(c) shows the result after a time slicing method has reduced the number of nodes (by a factor of $3$), filtering out nodes $N_{10}$, $N_{17}$, $N_{12}$, and $N_{19}$ (yellow) and leaving $N_{11}$ and $N_{18}$ (green) as the output. The resultant interpolated nodes are then sent to human analysts for further investigation.
\begin{figure*}[ht]
    \centering
    \includegraphics[width=1.0\textwidth]{./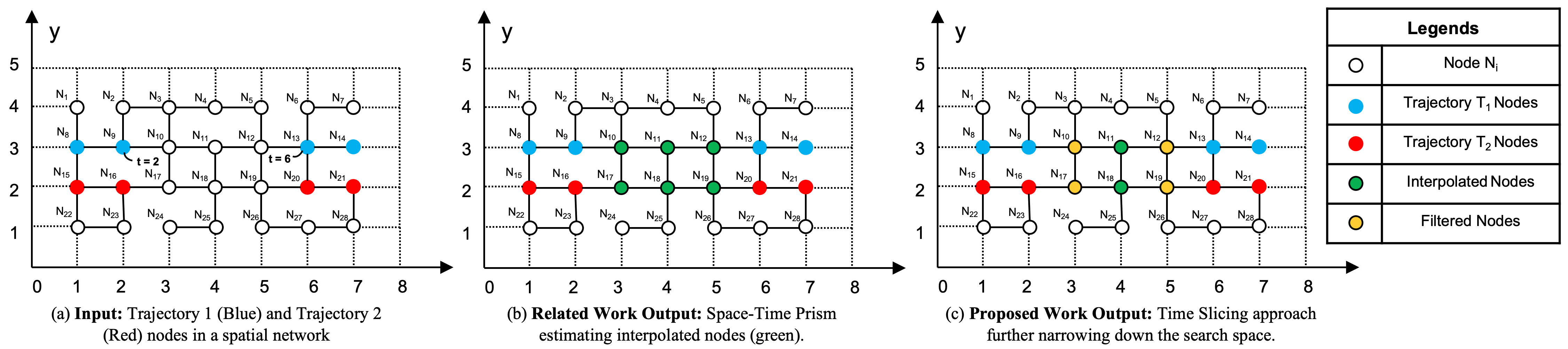}
    \captionsetup{justification=centering}
    \caption{A illustration of Rendezvous Region Detection in Spatial Networks (Best in color)}
    \label{fig:ProblemStatmeent}
\end{figure*}

Reducing the size of the space with possible rendezvous nodes is important for helping human analysts to detect and analyze trajectory data gaps. The smaller spatial area can be more effectively verified with ground truth via satellite imagery which can further aid early-stage decision-making. The problem has many societal applications related to homeland security, public health and safety etc. For instance, maritime safety involves monitoring activities such as illegal fishing and illegal oil transfers and transshipments \cite{transhipment}. Similarly, public health officials can analyze a given affected area where two objects could have met to control the potential spread of the disease. This paper focuses on the specific use case of improving public safety when two objects intentionally mask their movements to meet secretly within a trajectory gap area.

The problem is challenging since it is hard to model and interpret specific behavioral patterns, especially the rendezvous of two or more objects within a trajectory gap. Many methods rely on linear interpolation, which may lead to many missed patterns since moving objects do not always travel in a straight path. Methods based on the space-time prism are more geometrically accurate but identify large spatial regions, resulting in a time-intensive operation for the post-processing step performed by the human analyst. Further, such methods have a high computation cost due to large data volume. This work proposes computationally efficient time-slicing methods to effectively capture rendezvous regions with tighter spatial bounds in spatial networks.

The traditional literature \cite{zheng2015trajectory,demiryurek2011online} on the mobility patterns of objects in spatial networks considers realistic scenarios and events (e.g., traffic congestion) and other behavioral patterns \cite{dodge2008towards}. However, little attention has been given to movement patterns within trajectory gaps . Most works in this area are limited to linear interpolation methods using shortest path discovery \cite{ding2008efficient,yuan2010interactive}. Other works consider an object’s motion uncertainty via geometric-based methods (e.g., space-time prisms \cite{miller1991modelling,kuijpers2009modeling,kuijpers2011analytic}) using spatial geo-ellipse boundaries constructed via motion parameters (e.g., speed). One recent work \cite{uddin2017assembly} does explicitly consider rendezvous or meetup queries in a spatial network, which the authors call \textit{assembly queries}, but they used a loosely bounded geo-ellipse estimation from the space-time prism model. Our work proposes time-slicing methods to give tighter bounds on a given rendezvous region and provide further computational speedup using a dual convergence approach.

\textbf{Contributions:} The paper contributions are as follows:
\begin{itemize}[noitemsep,topsep=0pt]
\item We propose a time slicing model and theoretically show that it provides a tighter bound on possible rendezvous area relative to traditional space-time prisms.
\item Using the time slicing model, we propose a Time Slicing-based Gap-Aware Rendezvous Detection (TGARD) algorithm to effectively detect rendezvous nodes and a Dual Convergence TGARD (DC-TGARD) algorithm using bi-directional pruning to improve the computational efficiency.
\item We provide a theoretical evaluation for both algorithms based on correctness, completeness, and time complexity.
\item We validate both algorithms experimentally based on solution quality and computation efficiency on both synthetic and real-world datasets.
\end{itemize}

\textbf{Scope:} The paper proposes TGARD and DC-TGARD algorithms to tighten the spatial bounds of the rendezvous region in the spatial networks. We do not consider gaps with short time intervals (i.e., minutes, seconds etc.) or spatial areas that are low density or sparse. Kinetic prisms \cite{kuijpers2017kinetic} fall outside the scope of the paper. In addition, we do not model the rendezvous of the object's in trajectories without gaps. The proposed framework has multiple phases (i.e., filter, refinement, and calibration), but we limit this work to the filter phase. The refinement phase requires input from a human analyst and is not addressed here, and calibration of the cost model parameters is outside the scope.

\textbf{Organization:} The paper is organized as follows: Section \ref{sec:Problem Definition} introduces basic concepts, framework and the problem statement. Section \ref{sec:TimeSlicing} provides an overview of time slicing model. Section \ref{sec:Proposed Approach} describes the baseline TGARD and refined DC-TGARD algorithms. Section \ref{sec:TheoreticalEvaluation} shows theoretical evaluations for proposed algorithms based on correctness, completeness and asymptotic complexity. Experimental evaluations are presented for both algorithms and related work are presented in Section \ref{sec:ExperimentalEvaluation}. A broad and detailed literature survey is given in Section \ref{section:Related_work}. Finally, Section \ref{section:conclusion_future_work} concludes this work and briefly lists the future work.

\section{Problem Formulation}
\label{sec:Problem Definition}
\subsection{Framework}
\label{sec:framework}
We aim to identify possible rendezvous locations in a given set of trajectory gaps through a two-phase \emph{Filter} and \emph {Refine} approach. We introduce an intermediate time slicing filter to reduce number of interpolated nodes residing within the rendezvous region via proposed algorithms in a computationally efficient manner. The refinement phase further improves the solution quality so that human analysts may extract and analyze a comparatively fewer number of nodes involved in possible rendezvous by two (or more) objects. The inspection is further verified via satellite imagery to derive a possible hypothesis about the rendezvous activity (as shown in Figure \ref{fig:Framework}).

\begin{figure}[ht]
    \centering
    \includegraphics[width=0.47\textwidth]{./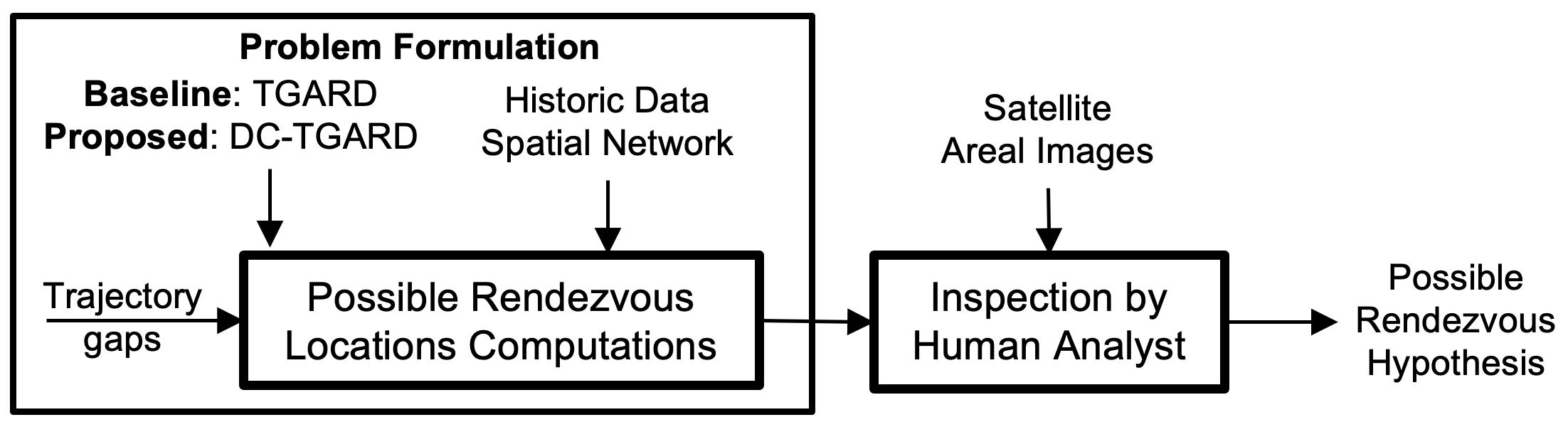}
    \caption{Framework for detecting possible rendezvous locations}
    \label{fig:Framework}
\end{figure}

\subsection{Basic Concepts}
A \textbf{spatial network} $G = (\mathcal{N}, \mathcal{E})$ consists of a node-set $\mathcal{N}$ and an edge set $\mathcal{E}$, where each element N $\in$ $\mathcal{N}$ is a geo-referenced point, while each element $E$ $\in$ $\mathcal{E}$ $\subseteq$ $N\times N$ has an edge weight $E_{w}$ i.e., the minimum time to travel from node $N_{i}$ to node $N_{j}$. Figure \ref{fig:ProblemStatmeent} shows a spatial network where circles represent nodes (e.g., $N_{1}$) and the lines represent edges. A road system is an example of a spatial network where nodes are intersections, and edges are segments.


A \textbf{trajectory gap} ($G_{i}$) is a spatial interpolated region within a missing location signal time period or Effective Missing Period (EMP) above a certain threshold $\theta$ (e.g., 30 mins) between two consecutive points. Such interpolated region can be modeled as a geo-ellipse on the x-y plane, which spatially delimits the extent of a moving object’s mobility given a maximum speed (MS) and effective missing signal period (EMP) (e.g., 30 mins etc) \cite{miller1991modelling}. Figure \ref{fig:TimeSliceSingle1} shows trajectory gap as spatial interpolated region in the form of a geo-ellipse (in blue) with $C_{1}$ and $C_{2}$ as focii and $t_{e}$-$t_{s}$ ($\geq$ 0) as $EMP$ with $MS_{i}$ as maximum speed.

A \textbf{possible rendezvous region} is defined as a set of overlapping nodes and edges when two trajectory gaps (i.e., spatially estimated in the form of geo-ellipse) involving different objects intersect with each other. Hence, such nodes can be defined as \textbf{possible rendezvous locations} for a set of objects within a trajectory gap. 
For instance, Figure \ref{fig:ProblemStatmeent} (b) shows interpolated nodes (in green) which may involved in a possible rendezvous location derived from the intersection of the two ellipses (as shown in Figure \ref{fig:TimeSliceDouble}) using a space-time prism based method \cite{uddin2017assembly} from trajectory $T_{1}$ and $T_{2}$. 

\subsection{Problem Formulation}\label{problem_definition}
The problem to optimally identify a trajectory gap region in a spatio-temporal domain is formulated as follows:\\
\textbf{Input:}
\begin{enumerate}
	\item A Spatial Network
	\item A set of $|N|$ Trajectory Gaps
	\item Historic Traffic Data.
\end{enumerate}
\textbf{Output:} A more tightly bound Possible Rendezvous Region\\
\textbf{Objective:} Solution Quality and Computational Efficiency\\
\textbf{Constraints:}
(1) Trajectories have long gaps (2) Maximum Acceleration is not available (3) Correctness and Completeness \\
Figure~\ref{fig:ProblemStatmeent} (a) shows the input as a two-dimensional representation of a trajectory gap. Figure~\ref{fig:ProblemStatmeent} (b) shows the output based on intersection of geo-ellipses resulting in interpolated nodes in green. Figure~\ref{fig:ProblemStatmeent} (c) provides a more refined output resulting in a smaller number of interpolated nodes for human analysts to inspect.

\section{Time Slicing Model}
\label{sec:TimeSlicing}
Our time slicing model uses a space-time prisms \cite{miller1991modelling} to provide a detailed representation of an object's physical space. We first describe space-time prisms and then describe the time slicing model.

Space-Time (ST) Prisms \cite{miller1991modelling} are a collection of spatial points bounded by a physical space which is represented as an interpolated region where moving objects could have passed at a given maximum speed MS. Figure \ref{fig:TimeSliceSingle1} shows the ellipse region in blue for a given time range [$t_{s}$,$t_{e}$] where $t_{s}$ denote the start time and $t_{e}$ denote the end time of the trajectory gap $G_{i}$. Equation \ref{ellipse} defines the geo-ellipse with focii ($x_{s},y_{s}$) and ($x_{e},y_{e}$) for a missing period $t_{e}$-$t_{s}$ as follows:
\begin{equation}
    \label{ellipse}
    \sqrt{(x - x_{s})^{2} + (y - y_{s})^{2}} + \sqrt{(x - x_{e})^{2} + (y - y_{e})^{2}} \leq (t_e - t_s)\times MS_{1}
\end{equation}

where, ($x_{s}$,$y_{s}$) and ($x_{e}$,$y_{e}$) are the start and end points of the trajectory gap with start time $t_{s}$ and end time $t_{e}$ ($t_{e}$ > $t_{s}$). The ellipse spatially delimits the extent of a moving object's mobility with maximum speed $MS_{1}$. 

Figure \ref{fig:TimeSliceDouble} shows the possible rendezvous region where we perform the spatial intersection of two ellipses $Ellipse_{1}$ $\in$ $G_{1}$ and $Ellipse_{2}$ $\in$ $G_{2}$ and time range is the calculated via inequality \ref{timerange_rendezvous}, 

\begin{equation}
    \label{timerange_rendezvous}
    [t_{s}^{Ellipse_{1}},t_{e}^{Ellipse_{1}}] \cap [t_{s}^{Ellipse_{2}},t_{e}^{Ellipse_{2}}] \neq \emptyset
\end{equation}

\begin{figure}[ht]
    \centering
    \includegraphics[width=0.30\textwidth]{./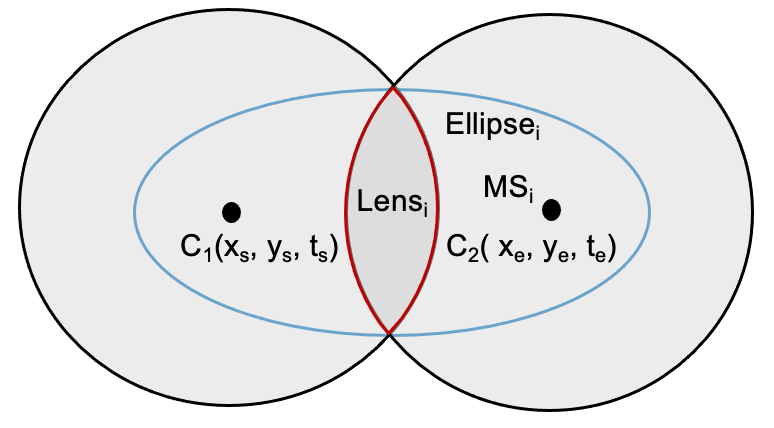}
    \caption{Time Slicing Model with Lens at time instant t}
    \label{fig:TimeSliceSingle1}
\end{figure}

A \textbf{Time Slice} is an object's physical space sampled within a geo-ellipse (i.e., a ST-Prism) of a trajectory gap. A time slice bounds the spatial intersection of two circles $C_{1}$ and $C_{2}$ at a given time instant $t$ where $t_{s}$ $\leq$ t $\leq$ $t_{e}$. Equations \ref{ineq2} and \ref{ineq3} represent the two circles generated from ($x_{s},y_{s}$) and ($x_{e},y_{e}$) at given time instant t (where $t_s$ $\leq t \leq$ $t_{e}$) as follows:
\begin{equation}
    \label{ineq2}
    C_{1} : (x_{i} - x_{s})^{2} + (y_{i} - y_{s})^{2} \leq (t - t_{s})^{2}MS_{i}^{2}
\end{equation}
 \begin{equation}
    \label{ineq3}
    C_{2} : (x_{i} - x_{e})^{2} + (y_{i} - y_{e})^{2} \leq (t_{e} - t)^{2}MS_{i}^{2}
\end{equation}
Figure \ref{fig:TimeSliceSingle1} shows the spatial bounds of a time slice denoted as \textbf{$Lens$} which is defined as the geometry derived via $C_{1}$ $\cap$ $C_{2}$ at a given time instant $t$. For instance, $Lens_{i}$ within an ellipse $Ellipse_{i}$ generated via maximum speed $MS_{i}$ at time $t$ $\in$ [$t_{s}$,$t_{e}$] i.e.,
\begin{equation}
    \label{ineq_L1}
    t_{s} \leq t \leq t_{e}
\end{equation}
\begin{equation}
    \label{ineq_C1}
    0 \leq MS_{i}(t-t_{s}) \leq MS_{i}(t_{e}-t_{s})
\end{equation}
 \begin{equation}
    \label{ineq_C2}
    0 \leq MS_{i}(t_{e}-t) \leq MS_{i}(t_{e}-t_{s})
\end{equation}
Subtracting Inequality \ref{ineq_C1} and \ref{ineq_C2}, we get inequality \ref{ineq_C3} which provides condition to define a lens $Lens_{i}$ as follows i.e., whether the radii intersection of $C_{1}$ $\cap$ $C_{2}$ $\geq$ 0.
\begin{equation}
    \label{ineq_C3}
    MS_{i}(t-t_{s}) - MS_{i}(t_{e}-t) \geq 0
\end{equation}

The bounded region in Figure \ref{fig:TimeSliceSingle1} shows $Lens_{i}$ i.e., intersection of $C_{1}$ $\cap$ $C_{2}$ $\geq$ 0. The bounded rendezvous regions at a given time instant $t$ is shown in Figure \ref{fig:TimeSliceDouble} with intersection of $Lens_{1}$ and $Lens_{2}$ $\subseteq$ $Ellipse_{1}$ $\cap$ $Ellipse_{2}$.

\begin{figure}[ht]
    \centering
    \includegraphics[width=0.30\textwidth]{./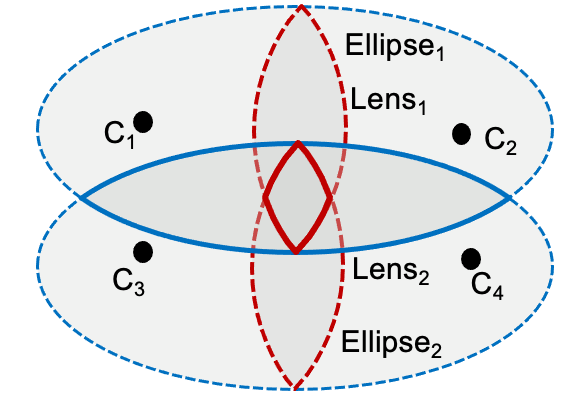}
    \caption{Time slicing model with lens intersection at time instant t}
    \label{fig:TimeSliceDouble}
\end{figure}

A rendezvous region using a time slicing model is defined as the intersection of $Lens_{i}$ and $Lens_{j}$ derived from $Ellipse_{1}$ $\in$ $G_{1}$ and $Ellipse_{2}$ $\in$ $G_{1}$. Figure \ref{fig:TimeSliceDouble} further represents the rendezvous region via $Lens_{1}$ $\cap$ $Lens_{2}$ at a given time instant $t$ which provides an even tighter bound as compared to the $Ellipse_{1} \cap Ellipse_{2}$. Lemma \ref{lemma:lemma1} and Lemma \ref{lemma:lemma2} provide formal proofs to the Theorem \ref{theorum:theorum1} which states that a time slice is a subset of the intersection of two space time prisms.

\begin{lemma}
\label{lemma:lemma1}
Given a pair <$G_{1}$,$G_{2}$> of gaps in trajectories in an isometric euclidean space, the space-time prism model bounds the area of possible rendezvous by the intersection of two ellipses $Ellipse_{1}$ and $Ellipse_{2}$ where $Ellipse_{1}$ specifies the possible locations during gap $G_{1}$ given maximum speed $MS_{1}$ and $Ellipse_{2}$ specifies the possible locations  during gap $G_{2}$ given maximum speed $MS_{2}$.
\label{lemma:Intersection_Ellipse}
\end{lemma}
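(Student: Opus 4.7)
My plan is to establish the lemma by two reductions: first, argue that each individual space-time prism, when projected onto the spatial (x-y) plane, is exactly the geo-ellipse of Equation \ref{ellipse}; and second, argue that a rendezvous point must lie in both projected prisms, hence in their intersection.

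For the first step, I would fix a single gap $G_i$ with endpoints $(x_s, y_s)$ at time $t_s$ and $(x_e, y_e)$ at time $t_e$ and maximum speed $MS_i$. For any candidate point $(x, y)$ visited at some intermediate time $t \in [t_s, t_e]$, the maximum-speed constraint forces $\sqrt{(x-x_s)^2 + (y-y_s)^2} \leq MS_i(t-t_s)$ and $\sqrt{(x-x_e)^2 + (y-y_e)^2} \leq MS_i(t_e-t)$. Adding these two inequalities eliminates $t$ and yields exactly the ellipse inequality of Equation \ref{ellipse}. Conversely, any point inside the ellipse admits a feasible intermediate time, so the spatial projection of the space-time prism for $G_i$ equals $Ellipse_i$.

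For the second step, suppose $(x, y)$ is a possible rendezvous location for the two objects. Then the first object must have been at $(x, y)$ at some time $t^{(1)} \in [t_s^{Ellipse_1}, t_e^{Ellipse_1}]$, so by Step 1 the point lies in $Ellipse_1$; analogously the second object's visit at time $t^{(2)}$ forces $(x, y) \in Ellipse_2$. In addition, for a genuine rendezvous the two visit times must coincide (or at least the intervals must overlap as required by Inequality \ref{timerange_rendezvous}), which is automatic from the definition of ``possible rendezvous region.'' Consequently the set of rendezvous locations is contained in $Ellipse_1 \cap Ellipse_2$, which is exactly the bound claimed.

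I do not expect any deep obstacle here: the lemma is essentially a restatement of Miller's classical space-time prism projection together with a set-intersection argument. The only place one has to be mildly careful is in making the temporal-feasibility condition explicit, so that ``possible rendezvous'' is unambiguously defined as a simultaneous visit rather than two independent visits at different times; that is precisely what Inequality \ref{timerange_rendezvous} guarantees, so I would cite it when asserting that the two intermediate times can be chosen consistently.
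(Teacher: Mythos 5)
Your proof is correct and takes essentially the same route as the paper: the paper's own argument is exactly your second step, namely that any possible rendezvous point must satisfy both ellipse constraints together with the temporal-overlap condition of Inequality \ref{timerange_rendezvous}, and therefore lies in $Ellipse_{1} \cap Ellipse_{2}$. Your first step, deriving each ellipse as the spatial projection of the corresponding space-time prism by summing the two maximum-speed distance constraints, is additional rigor that the paper absorbs into the definition of the geo-ellipse (Equation \ref{ellipse}, following Miller) rather than proving inside the lemma, so it strengthens the write-up without changing the method.
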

\begin{proof} The proof is straightforward. Given a pair of intersecting trajectory gaps $G_{1}$ and $G_{2}$, any possible point $P$ $\in$ $\lbrace Ellipse_{1}, Ellipse_{2} \rbrace$ such that $Ellipse_{1}$ (P) $\geq$ 0 and $Ellipse_{2}$ (P) $\geq$ 0 must also satisfies inequality \ref{timerange_rendezvous}. Using the inequality \ref{timerange_rendezvous}, $P$ spatially qualifies within possible rendezvous location bounded by Equation \ref{ineq_E1intE2}:
 \begin{equation}
    \label{ineq_E1intE2}
    Ellipse_{1} \cap Ellipse_{2}(P) \geq 0 \subseteq \lbrace Ellipse_1 \ , \ Ellipse_2 \rbrace
\end{equation}

Equation \ref{ineq_E1intE2} shows a possible rendezvous area via intersection of $Ellipse_{1}$ and $Ellipse_{2}$ $\subseteq$ $\lbrace$$Ellipse_{1}$, $Ellipse_{2}$$\rbrace$ where $Ellipse_{1}$ (P) $\geq$ 0 and $Ellipse_{2}$ (P) $\geq$ 0. Hence, the intersection of space-time prisms bounds the area of possible rendezvous.
\end{proof}

\begin{lemma}
\label{lemma:lemma2}
Given a pair <$G_{1}$,$G_{2}$> of gaps in trajectories in isometric euclidean space, the time slicing model bounds the instantaneous area of possible rendezvous at time $t$ within the gap time interval by the intersection of $Lens_{1}$ and $Lens_{2}$ where $Lens_{1}$ specifies possible locations during gap $G_{1}$ given maximum speed $MS_{1}$ and Lens $Lens_{2}$ specifies possible locations during gap $G_{2}$ given maximum speed $MS_{2}$.
\end{lemma}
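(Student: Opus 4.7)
The plan is to mirror the structure of the proof of Lemma \ref{lemma:lemma1}, but applied at a fixed time instant $t$ rather than across the entire gap interval. First I would fix an arbitrary $t$ in the overlap of the two gap intervals (which is non-empty by Inequality \ref{timerange_rendezvous}), and argue physically that any point $P$ at which object 1 can be located at time $t$ must simultaneously be reachable from the gap start $(x_s, y_s)$ within elapsed time $t - t_s$ at maximum speed $MS_1$, and able to reach the gap end $(x_e, y_e)$ within remaining time $t_e - t$ at the same maximum speed. These are exactly the two constraints $C_1$ and $C_2$ in Inequalities \ref{ineq2} and \ref{ineq3}, whose conjunction defines $Lens_1 = C_1 \cap C_2$.

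Next, I would apply the identical forward/backward reachability argument to object 2 with maximum speed $MS_2$ on gap $G_2$, obtaining that its instantaneous possible-location set at the same $t$ is precisely $Lens_2$. The feasibility of both lenses being non-empty at $t$ is guaranteed by Inequality \ref{ineq_C3}, which I would invoke separately for $i=1$ and $i=2$ to confirm that we are sampling a valid instant within each prism.

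Finally, I would invoke the definition of a rendezvous at time $t$: both objects must occupy the same spatial point $P$. Hence $P$ must lie in each of the two instantaneous possible-location sets, giving $P \in Lens_1 \cap Lens_2$. An analogue of Equation \ref{ineq_E1intE2} written for lenses at instant $t$ would close the argument.

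The main obstacle, in my view, is not any calculation (the geometric content is already packaged in Inequalities \ref{ineq2}--\ref{ineq_C3}), but a careful justification that the two reachability disks for a single object are individually \emph{tight}, so that their intersection is exactly the set of physically attainable positions at time $t$. This requires pointing out that for any $P \in C_1 \cap C_2$ there is a straight-line-to-$P$-then-straight-line-to-$(x_e,y_e)$ trajectory respecting $MS_i$, and conversely any attainable position must obey the two triangle-inequality constraints that produced $C_1$ and $C_2$. Once this tightness is noted, the remainder reduces to intersecting the two per-object characterizations across the two gaps.
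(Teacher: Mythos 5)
Your argument is correct and takes essentially the same route as the paper's proof: fix an instant $t$ in the overlapping gap interval, bound each object's instantaneous positions by its lens $C_{1}\cap C_{2}$ (Inequalities \ref{ineq2}--\ref{ineq_C3}, instantiated for each gap with $MS_{1}$ and $MS_{2}$), and intersect the two per-object bounds to obtain $Lens_{1}\cap Lens_{2}$, with the paper then merely citing Lemma \ref{lemma:lemma1} to embed this set in $Ellipse_{1}\cap Ellipse_{2}$. Your extra tightness claim (that every point of $C_{1}\cap C_{2}$ is attainable via a straight-line-to-$P$-then-straight-line path) is sound but not required for the stated containment and is not part of the paper's proof.
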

\begin{proof}
The circle intersection condition $C_{1}$ $\cap$ $C_{2}$ $\geq$ 0 is already satisfied from Inequality \ref{ineq_C3}. We can extend Inequality \ref{ineq_Lens1} and \ref{ineq_Lens2} for $Lens_{1}$ and $Lens_{2}$ respectively at a given time $t$:
\begin{equation}
    \label{ineq_Lens1}
    MS_{1}(t-t_{s}) - MS_{1}(t_{e}-t) \geq 0
\end{equation}
\begin{equation}
    \label{ineq_Lens2}
    MS_{2}(t-t_{s}) - MS_{2}(t_{e}-t) \geq 0
\end{equation}
Inequalities \ref{ineq_Lens1} and \ref{ineq_Lens1} qualifies any point $P$ such that $Lens_{1}$(P) $\geq$ 0 and $Lens_{2}$(P) $\geq$ 0. Using Lemma \ref{lemma:lemma1}, any point $P$ $\in$ $Lens_{1}$ $\cap$ $Lens_{2}$ $\subseteq$ $Ellipse_{1}$ $\cap$ $Ellipse_{2}$ $\subseteq$ $\lbrace$ $Ellipse_{1}$ , $Ellipse_{2}$ $\rbrace$ is bounded by instantaneous time $t$.
\end{proof}


\begin{theorem}
\label{theorum:theorum1}
Given a pair <$G_{1}$,$G_{2}$> of trajectory gaps in isometric euclidean space, the possible rendezvous area (i.e., lens intersection) bounded at any instant during the gap time interval is a subset of the possible rendezvous area (i.e., ellipse intersection) bounded by the space-time prism model.
\end{theorem}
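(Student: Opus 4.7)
The plan is to derive Theorem~\ref{theorum:theorum1} as a direct consequence of Lemmas~\ref{lemma:lemma1} and~\ref{lemma:lemma2} by establishing the per-gap containment $Lens_i \subseteq Ellipse_i$ for each $i \in \{1,2\}$ and then intersecting both inclusions. Since Lemma~\ref{lemma:lemma1} already certifies that $Ellipse_1 \cap Ellipse_2$ bounds the possible rendezvous area produced by the space-time prism model, it suffices to show that the instantaneous lens intersection at any time $t \in [t_s, t_e]$ lies inside that elliptic intersection.

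First, I would fix a time instant $t$ within the overlapping gap interval and pick an arbitrary point $P = (x,y) \in Lens_i$. By the definition of the lens via Inequalities~\ref{ineq2} and~\ref{ineq3}, the point $P$ satisfies both circle constraints simultaneously; taking square roots (both sides are non-negative because $t_s \leq t \leq t_e$) yields
\begin{equation*}
\sqrt{(x-x_s)^2 + (y-y_s)^2} \leq MS_i(t-t_s) \quad \text{and} \quad \sqrt{(x-x_e)^2 + (y-y_e)^2} \leq MS_i(t_e-t).
\end{equation*}
Summing these two bounds, the time differences telescope via $(t - t_s) + (t_e - t) = (t_e - t_s)$, reproducing exactly the geo-ellipse defining Equation~\ref{ellipse}, so $P \in Ellipse_i$. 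This establishes $Lens_i \subseteq Ellipse_i$ for $i \in \{1,2\}$.

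Finally, intersecting the two per-gap inclusions gives $Lens_1 \cap Lens_2 \subseteq Ellipse_1 \cap Ellipse_2$, and combining this with Lemma~\ref{lemma:lemma1} closes the argument. I do not anticipate a major technical obstacle here: the step that deserves care is to verify that $P$ belongs to both circles $C_1$ and $C_2$ at the \emph{same} instant $t$, which is guaranteed by the definition of $Lens_i$ as an instantaneous slice of the prism (ensured by Inequality~\ref{ineq_C3}). Once this is acknowledged, the containment is a one-line algebraic consequence of the additivity of the two time gaps, so the only remaining commentary is to observe that the inclusion is generally strict whenever $t$ is bounded away from the midpoint $(t_s + t_e)/2$, which is precisely what makes the time-slicing bound tighter than the prism intersection.
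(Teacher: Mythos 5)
Your proposal is correct, but it does not follow the paper's route. The paper combines the two circle constraints (\ref{ineq2}) and (\ref{ineq3}) by \emph{adding them in squared form}, obtaining the sum-of-squared-distances bound (\ref{ineq4}), asserting that this "denotes a property of the ellipse," and then appealing to Lemmas~\ref{lemma:lemma1} and~\ref{lemma:lemma2} together with the union statement (\ref{ineq5}). You instead take square roots of each circle constraint first and add the resulting distance bounds, so that $(t-t_s)+(t_e-t)$ telescopes to $(t_e-t_s)$ and you recover \emph{exactly} the geo-ellipse condition (\ref{ellipse}); this gives $Lens_i\subseteq Ellipse_i$ pointwise, and intersecting over $i\in\{1,2\}$ plus Lemma~\ref{lemma:lemma1} finishes the argument. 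Your version is actually the tighter and more rigorous of the two: a bound on $d_1^2+d_2^2$ by $2\bigl((t_e-t_s)MS_i\bigr)^2$ as in (\ref{ineq4}) only yields $d_1+d_2\le 2(t_e-t_s)MS_i$ via the quadratic-mean inequality, i.e.\ containment in an ellipse with twice the required major axis, whereas your unsquared summation lands precisely on (\ref{ellipse}); it also avoids the type mismatch in (\ref{ineq5}), where a union of lens regions is compared against a union of scalars. What the paper's phrasing buys is a direct reuse of Lemma~\ref{lemma:lemma2}'s instantaneous-bound language; what yours buys is an airtight one-line containment. One small caveat: your closing remark that the inclusion is strict only when $t$ is away from the midpoint $(t_s+t_e)/2$ is imprecise --- even at the midpoint the lens is a proper subset of the ellipse whenever the two foci are distinct --- but this aside is not load-bearing for the proof.
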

\begin{proof}
Given points ($x_{s}$, $y_{s}$, $t_{s}$) and ($x_{e}$, $y_{e}$, $t_{e}$) are focii of the ellipse $E$, According to Equation \ref{ellipse}, a point (x,y) can lie anywhere in the geo-ellipse such that sum of distance from focii ($x_{s}$, $y_{s}$) and ($x_{e}$, $y_{e}$) $\leq$ ($t_{e}$-$t_{s}$). 
Equation \ref{ineq4} below derived from addition of Equation \ref{ineq2} and \ref{ineq3} of two circles also denotes a property of the ellipse.
\begin{equation}
    \label{ineq4}
    (x_{i} - x_{s})^{2} + (y_{i} - y_{s})^{2} + (x_{i} - x_{e})^{2} + (y_{i} - y_{e})^{2} \leq 2(t_{s} - t_{e})^{2}\times MS_{i}^{2}
\end{equation}
The left hand side of Inequality \ref{ineq4} also denotes each $Lens_{i}$ is valid $\forall$ t $\in$ $t_s \leq t_{i} \leq t_{e}$:
\begin{equation}
    \label{ineq5}
    \bigcup\limits_{t=t_{s}}^{t_{e}}Lens_{t} \subseteq \bigcup\limits_{t=t_{s}}^{t_{e}}(t_{s} - t_{e})^{2}\times MS_{i}^{2}
\end{equation}

In addition, using Lemma \ref{lemma:lemma1}, and \ref{lemma:lemma2}, the areal bounds defined by $Lens_{i}$ will not exceed the bounds defined by the geo-ellipse. 
\end{proof}


These lemmas and theorems can easily be generalized to spatial networks (e.g., road networks) by generalizing the ellipses and lenses to the subgraphs reachable during a gap time interval and the subgraphs reachable at a particular time instant within a gap time interval respectively, given a gap-start node, gap-end node, maximum speed and a gap time-interval. Due to lack of space, we are omitting the detailed proofs.


\section{Proposed Approach}
\label{sec:Proposed Approach}
In this section, we first explain some underlying concepts which are later used to define the output of the problem (i.e., \textit{possible rendezvous nodes}) in spatial networks. We then discuss the required pre-processing steps related to gathering candidate trajectory gap pairs which are later temporally sampled to construct spatial sub-networks. Finally, we present our proposed TGARD and DC-TGARD algorithms which leverage time slicing properties for better solution quality and computational efficiency. 

To calculate possible rendezvous locations, we first estimate the \textit{availability interval} for each object and then calculate how early that object is able to reach (i.e., \textit{earliest arrival time}) and the latest time the object can depart (\textit{latest departure time}) at a given node. We formally define them as follows:

An \textbf{Earliest Arrival Time} ($N^{EA}_{i}$) is the minimum time object $O_{i}$ takes from start node $N_{s}$ to an intermediate node $N_{u}$ i.e., $N_{u}^{EA}$ = $N_{u}^{EA}$ + $E_{w}$. A \textbf{Latest Departure Time} ($N^{EA}_{i}$) is the minimum time the object $O_{i}$ takes from end node $N_{d}$ to the intermediate node $N_{u}$ where $N_{u}^{LD}$ = $N_{u}^{LD}$ - $E_{w}$. An \textbf{availability interval ($\alpha$)} is the time interval that object $O_{i}$ wait at node $N_{u}$ such that $\alpha$(u) = [$N_{u}^{EA}$,$N_{u}^{LD}$] $\neq$ $\emptyset$, where $EA_{i}$(u) $\leq$ $LD_{i}$(u). A node $N_{u}$ is defined as \textbf{reachable} when $\alpha$($u$) $\neq$ $\emptyset$ and \textbf{not reachable} if $\alpha$(u) = $\emptyset$ or $N^{u}_{EA}$ $\leq$ $N^{u}_{LD}$. 

\begin{figure}[ht]
    \centering
    \includegraphics[width=0.30\textwidth]{./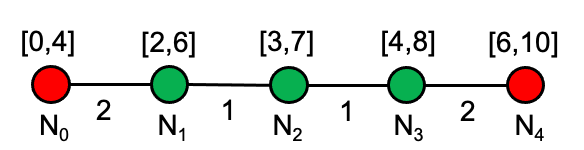}
    \caption{An illustration of Possible Rendezvous Nodes}
    \label{fig:TAG}
\end{figure}

Figure \ref{fig:TAG} shows nodes $N_{0}$ to $N_{4}$ with intervals [0,4], [2,6], [3,7], [4,8] and [6,10] respectively. Nodes $N_{0}$ and $N_{4}$ are start and end nodes respectively and nodes [$N_{1}$,$N_{2}$,$N_{3}$] $\in$ $N_{u}$. The earliest arrival is calculated by adding $E_{w}$ to a given intermediate node $N_{u}$. For instance, $N_{1}^{EA}$ = $N_{0}^{EA}$ + $E_{w}$($N_{0}, N_{1}$) = 0 + 2 = 2. Similarly, the values of $N_{2}^{EA}$, $N_{3}^{EA}$ and $N_{4}^{EA}$ are 3, 4 and 6 respectively. In contrast, the latest departure is calculated by subtracting $E_{w}$ to a given intermediate node $N_{u}$. For instance, ${N_{4}^{LD}}$ = $N_{5}^{LD}$ - d($N_{5}, N_{4}$) = 10 - 2 = 8. Similarly, the values of $N_{2}^{LD}$, $N_{3}^{LD}$ and $N_{4}^{LD}$ are 3, 4 and 6 respectively. 

In the case of multiple paths $Ps$ from a single source (i.e., $N_{s}$ or $N_{e}$), shortest path algorithms (e.g., \textbf{Bi-directional Dijkstra's}) or other breadth first search approaches are considered. Hence, multiple arrival times and late departure times need to be considered for a given node $N_{i}$ such that the earliest arrival time will be the \textbf{minimum} of all the arrival times derived via the shortest path from $N_{s}$ to $N_{u}$ (i.e., min([$N^{EA_{1}}_{u}$,$N^{EA_{2}}_{u}$,...,$N^{EA_{k}}_{u}$])). In contrast, the latest departure time will be the \textbf{maximum} of all the departure times that result via the shortest paths from $N_{e}$ to $N_{u}$ (i.e., max([$N^{LD_{1}}_{u}$,$N^{LD_{2}}_{u}$,...,$N^{LD_{k}}_{u}$])).
\begin{equation}
    \label{ineq:availability_interval}
    \alpha(N_{u}) = [min([{N}_{u}^{EA_{1}},..,N^{EA_{k}}_{u}]),max([{N}_{u}^{LD_{1}},..,N^{LD_{k}}_{u}])]
\end{equation}
A \textbf{Possible Rendezvous Node ($N_{R}$):} is defined as a possible node $N_{u}$ (or location) where two or more gaps belongs to different gaps could have physically met. For instance, the availability intervals of gaps $G_{i}$ and $G_{j}$ at $N_{u}$ are defined as $\alpha(N_{u},G_{i})$ and $\alpha(N_{u},G_{j})$ respectively. The rendezvous is feasible only if :
\begin{equation}
    \label{sufficient}
    \alpha(N_{u},G_{i}) \cap \alpha(N_{u},G_{j}) \geq TO
\end{equation}

where TO ($\neq$ $\emptyset$) is defined as \textbf{Time Overlap threshold}. 

If the above conditions are satisfied, the nodes are then the nodes are forwarded to human analysts for manual inspection. 



\subsection{Constructing Spatial Sub-networks}
Section \ref{sec:TimeSlicing} describes how time slicing in an unconstrained (i.e., euclidean) space narrows down the search space through the construction of a lens at given time instant $t$ to provide tighter bounds as compared to entire geo-ellipse region. 

In spatial networks, we restrict such unconstrained movements which closely resembles an vehicle's mobility in a road network. However, we use the combination of geo-ellipses and lenses to geometrically restrict the object movement capabilities in a Manhattan search space, thereby providing an upper bound towards the mobility. In addition to tightening the spatial bounds, time slicing also capture the time-dependent properties of spatial networks which more accurately model an object's mobility in real-world scenarios.

Algorithm \ref{Preprocess} first gathers trajectory gap pairs based on their temporal properties via a \textit{Filter} and \textit{Refine} approach to minimize the combinatorial computations. Then we temporally sample each trajectory gaps which captures a time-dependent edge which can later affects overall object's mobility.

\subsubsection{Extract Trajectory Gap Pairs:} 
\label{extracting_gaps}
For the effective gap collection, we use a \textit{plane-sweep approach} to sort the trajectory gaps $G_{i}$ and then filter out gaps that are not involved in a rendezvous. First, we sort all gaps $G_{i}$ $\in$ [$t_{s}$,$t_{e}$] based on the start time $t_{s}$ of the coordinates of $G_{i}$. Then, we filter gap pairs by checking the \textit{necessary} condition i.e., whether their respective time ranges overlap. For instance, Gap $G_{i}$ and $G_{j}$ should satisfy their respective time ranges [$t_{s}^{G_{i}}$,$t_{e}^{G_{i}}$] and [$t_{s}^{G_{j}}$,$t_{e}^{G_{j}}$] $\neq$ $\emptyset$ (using Equation \ref{timerange_rendezvous}). 

For the sufficient condition, we check if $G_{i}$ and $G_{j}$ are spatially intersecting such that $Ellipse_{i}$ $\cap$ $Ellipse_{j}$ $\neq$ $\emptyset$. If yes, then we save the resultant shape of $Ellipse_{i}$ $\cap$ $Ellipse_{j}$ and time range [$t_{s}^{G_{i}}$,$t_{e}^{G_{i}}$] $\cap$ [$t_{s}^{G_{j}}$,$t_{e}^{G_{j}}$] = [$t_{s}^{R}$,$t_{e}^{R}$] for creating spatial sub-networks.

\subsubsection{Creating Spatial Sub-Networks:} 
\label{creating_subnetworks}
Here, we generate spatial sub-networks based on the qualifying nodes within the spatial (i.e., $Ellipse_{1} \cap Ellipse_{2}$) and temporal [$t_{s}^{R}$,$t_{e}^{R}$] constraints of the trajectory gap pairs <$G_{1}$,$G_{2}$> as described in Section \ref{extracting_gaps}. To create a spatial network ($SN$), we perform a linear scan of ($N$) nodes and edges ($E$) $\in$ $Ellipse_{1} \cap Ellipse_{2}$ and to calculate edge weight $E_{w}$ by considering historic location traces residing within the time range of [$t_{s}^{R}$,$t_{e}^{R}$] and calculate the time cost by taking average speed $\mu$(MS) and dividing it by total edge distance between $N_{i}$ and $N_{j}$.

During time slicing, we capture the dynamic edge weights within [$t_{s}^{R}$,$t_{e}^{R}$]. First perform uniform sampling for a given temporal range into $N$ samples where [$t_{i}^{R},t_{i+1}^{R}$] $\in$ [$t_{s}^{R}$,$t_{e}^{R}$]. We then compute edge weights $E^{i}_{w}$ with given time range [$t_{i}^{R},t_{i+1}^{R}$] and filter nodes which are qualified within the $Lens_{i}$ where $N_{i}$ $\subseteq$ $N$. Hence for each time frame we generate subnetworks $SN_{i}$ with nodes $N^{i}_{u}$ edge weights $E^{i}_{w}$ $\forall$ i $\in$ [$t_{i}^{R},t_{i+1}^{R}$]. 

\begin{algorithm}
\caption{Spatial Sub-Networks with Rendezvous Gap Pairs}
\label{Preprocess}
\footnotesize
\begin{flushleft}
\hspace{\algorithmicindent}\textbf{Input:}\\
\hspace*{\algorithmicindent} Historic Trajectory Data (HTD)\\
\hspace*{\algorithmicindent} A set of Trajectory Gaps \lbrack$G_{1}$,...,$G_{n}$\rbrack\\
\hspace*{\algorithmicindent} A Spatial Network N\\
\hspace*{\algorithmicindent} A Sampling Rate K\\
\hspace*{\algorithmicindent}\textbf{Output:}\\
\hspace*{\algorithmicindent} Spatial Sub-Networks List\\
\end{flushleft}
\footnotesize
\begin{algorithmic}[1]
    \Procedure{:}{}
    \State {Sub-Networks Map $\leftarrow \emptyset$}
    \ForEach{$G_{i}$ $\in$ Non-Observed List}:
        \ForEach{$G_{j}$ $\in$ Observed list if Observed list $\neq$ $\emptyset$}
            \If{[$t_{s}^{G_{i}}$,$t_{e}^{G_{i}}$] $\cap$ [$t_{s}^{G_{j}}$,$t_{e}^{G_{j}}$] and $G_{i} \cap G_{j}$ $\neq \emptyset$}:
                \State [$SN_{0},SN_{1},..,SN_{k}$] $\leftarrow$ Subnetworks ($G_{i} \cap G_{j}$, HTD, N, K)
                \State Sub-Networks Map [<$G_{i},G_{j}$>] $\leftarrow$ [$SN_{0},SN_{1},..,SN_{k}$]
            \EndIf
        \EndFor
    \EndFor
    \State \Return Sub-Networks Map
\EndProcedure
\end{algorithmic}
\end{algorithm}

Algorithm \ref{Preprocess} summarizes the process of gathering trajectory gap pairs and creation of spatial networks in steps as follows:

\textbf{Step 1:} For a given trajectory gap $G_{i}$ in a Non-Observed list, we first check if $G_{i}$ intersects with $G_{j}$ in the Observed List $\not\in$ $\emptyset$. In case Observed List $\in$ $\emptyset$, we add $G_{i}$ $\in$ Observed List and move to $G_{i+1}$ in the Non-Observed List. If $G_{j}$ present in the Observed List, we check if the Ellipse bounds of $G_{i} \cap G_{j}$ and [$t_{s}^{G_{i}}$,$t_{e}^{G_{i}}$] $\cap$ [$t_{s}^{G_{j}}$,$t_{e}^{G_{j}}$] $\neq$ $\emptyset$. If not, then we derive intersected temporal range [$t_{s}^{R},t_{e}^{R}$] $\in$ [$t_{s}^{G_{i}}$,$t_{e}^{G_{i}}$] $\cap$ [$t_{s}^{G_{j}}$,$t_{e}^{G_{j}}$] and the resultant geo-ellipse intersection boundary of ($G_{i} \cap G_{j}$) for unconstrained rendezvous study area which is later projected on a the given spatial network $N$ to extract sub-network $SN$ for gap pairs <$G_{i}$,$G_{j}$> where $SN$ $\subseteq$ $N$. 

\textbf{Step 2:} For a given trajectory gap pair <$G_{i}$,$G_{j}$> and sampling rate K, we sample time interval [$t_{s}^{R},t_{e}^{R}$] in K samples such that each sub-network $SN_{i}$,$SN_{i+1}$,...,$SN_{k}$ has it's corresponding time-stamp [$t_{i}^{R},t_{i+1}^{R}$,...,$t_{k-1}^{R},t_{k}^{R}$]. Using historic trajectory data HTD, we calculate edge weights $E_{w}$ $\forall$ $SN$ using Equation \ref{edge_weight} where each $SN_{i}$ $\in$ $\{$ $E_{w}$, $t_{k}^{R}$ $\}$ is then saved in Sub-Network Map which is later returned as the output.

\subsection{Time Slicing Gap-Aware Rendezvous Detection Algorithm (TGARD)} 
The proposed TGARD algorithm captures an object's more realistic movements by considering parameters such as traffic congestion, which later affect the shortest path computation needed to calculate Earliest Arrival and Latest Departure time of the given sampled sub-network $SN_{k}$ where each sub-network $SN_{k}$ $\in$ [$t_{k}^{R}$] and $Lens_{k}$. Intermediate nodes $N_{u}$ is defined within $Lens_{k}^{G_{i}}$ $\cap$ $Lens_{k}^{G_{j}}$ towards which shortest path computation form start and end nodes is computed to calculate $N_{u}^{EA}$ and $N_{u}^{LD}$. However, we need to perform shortest path computation for every $N_{u}$ of every sampled sub-network $SN_{k}$ and edge weights $E_{w}^{k}$. Figure \ref{fig:TimeDependent} is a time aggregated graph representation \cite{gunturi2011critical} where each edge represents time series edge weights [$E_{w}^{t_{1}}$,$E_{w}^{t_{2}}$], and $E_{w}^{t_{1}}$ $\in$ $P_{1}$ and $E_{w}^{t_{2}}$ $\in$ $P_{2}$, which gives the shortest path from $N_{0}$ to $N_{6}$.

\begin{figure}[ht]
    \centering
    \includegraphics[width=0.35\textwidth]{./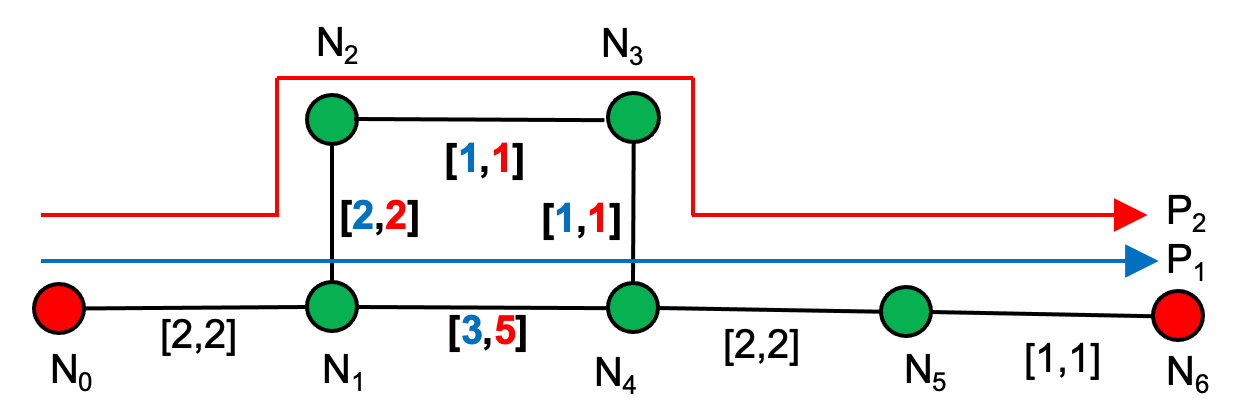}
    \caption{An illustration of Time Dependent Shortest Path}
    \label{fig:TimeDependent}
\end{figure}

In Figure \ref{fig:TimeDependent}, $P_{1}$ takes the path $N_{0}$, $N_{1}$, $N_{4}$, $N_{5}$, and $N_{6}$ at $t_{0}$ (since 2+1+1 $\geq$ 3). In contrast, path $P_{2}$ uses $N_{0}$, $N_{1}$, $N_{2}$, $N_{3}$, $N_{4}$, $N_{5}$ and $N_{6}$ since the path from $N_{1}$ and $N_{2}$ at $t_{1}$ is 5 $\geq$ 2+1+1 (i.e., $N_{1}$, $N_{2}$, $N_{3}$ and $N_{4}$). In this case, node $N_{4}$ gets affected and the shortest path needs to be recomputed towards $N_{4}$ at time $t_{2}$. To capture such changes in edge weights in  $E_{w}^{i}$ $\in$ $t_{i}$ and $E_{w}^{i+1}$ $\in$ $t_{i+1}$, we use a difference metric $\delta$($E_{w}^{i,i+1}$) and compare it with a threshold $\tau$ (i.e., whether $\delta$($E_{w}^{i,i+1}$) $\geq$ $\tau$. If it is, then we recompute the shortest path towards that node. In Algorithm \ref{TGARD}, we denote $\delta$($E_{w}^{i,i+1}$) as $\delta$($E_{w}^{k}$). For instance, in Figure \ref{fig:TimeDependent}, node $N_{4}$ gets affected at time $t=2$ given $\tau$ = 2 since $\delta$($E_{w}^{i,i+1}$) $\geq$ $\tau$. Hence we recompute the shortest path at $P2$ from $N_{0}$ to $N_{6}$ via $N_{2}$ and $N_{3}$. If not, the current path remains the same for future computations and thereby retains it's optimal sub-structure property for shortest path, stated as follows: 


\begin{theorem}
\label{theorum:theorum_SP}
Given $G = (V,E)$ with edge weights $E_{w}$. Let $P$ = $[N_{1}, N_{2}, N_{3}, ..., N_{k}]$ be the shortest path from $N_{1}$ to $N_{k}$ such that 1 $\leq$ i $\leq$ j $\leq$ k. If $P_{ij}$ = $[N_{i}, N_{i+1}, ..., N_{j}]$ is the sub-path of $P$ from node $N_{i}$ to $N_{j}$. Then, $P_{ij}$ is the shortest path from $P_{i}$ and $P_{j}$.
\end{theorem}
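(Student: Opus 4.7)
The plan is to prove this classical optimal-substructure property by a standard cut-and-paste contradiction argument. First, I would decompose the shortest path $P$ from $N_1$ to $N_k$ into three consecutive segments: the prefix $P_{1i} = [N_1, \ldots, N_i]$, the middle sub-path $P_{ij} = [N_i, \ldots, N_j]$, and the suffix $P_{jk} = [N_j, \ldots, N_k]$. By additivity of edge weights along a walk, the total cost satisfies $w(P) = w(P_{1i}) + w(P_{ij}) + w(P_{jk})$, where $w(\cdot)$ denotes the sum of $E_w$ values along the corresponding edges.

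Next, I would assume for contradiction that $P_{ij}$ is \emph{not} a shortest path from $N_i$ to $N_j$. Then there must exist an alternative path $P'_{ij}$ from $N_i$ to $N_j$ with $w(P'_{ij}) < w(P_{ij})$. I would then construct the concatenated walk $P' = P_{1i} \circ P'_{ij} \circ P_{jk}$, which is still a valid walk from $N_1$ to $N_k$ since it respects endpoints at the splice points $N_i$ and $N_j$. By additivity,
\begin{equation}
w(P') = w(P_{1i}) + w(P'_{ij}) + w(P_{jk}) < w(P_{1i}) + w(P_{ij}) + w(P_{jk}) = w(P).
\end{equation}
This contradicts the hypothesis that $P$ is a shortest path from $N_1$ to $N_k$, so $P_{ij}$ must itself be a shortest path from $N_i$ to $N_j$.

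The main conceptual obstacle is mild but worth flagging: the concatenated object $P'$ is a walk rather than necessarily a simple path, because $P'_{ij}$ could in principle share vertices with $P_{1i}$ or $P_{jk}$. Under the paper's assumption of non-negative edge weights (implicit since $E_w$ represents travel time), any walk whose weight is strictly less than that of $P$ either is already a simple path or can be shortened to one by removing cycles without increasing weight, so the contradiction still goes through. I would note this briefly to make the argument airtight, and then conclude that $P_{ij}$ is indeed a shortest path from $N_i$ to $N_j$, establishing the optimal-substructure property used by the time-dependent Dijkstra-style updates in Algorithm \ref{TGARD}.
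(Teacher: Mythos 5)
Your proposal is correct and follows essentially the same route as the paper: decompose $P$ into the prefix, the sub-path $P_{ij}$, and the suffix, then splice in a hypothetically cheaper $P'_{ij}$ to contradict the optimality of $P$. Your extra remark about the concatenation being a walk that can be reduced to a simple path under non-negative edge weights is a small tightening the paper omits, but the core argument is identical.
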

\begin{proof}
The proof is straightforwards. If we decompose $P$ into $P_{1,j}$ $\rightarrow$ $P_{i,j}$ $\rightarrow$ $P_{j,k}$, where $E_{w}$($P$) = $E_{w}$($P_{1,j}$) + $E_{w}$($P_{i,j}$) + $E_{w}$($P_{j,k}$) and assume a new path $P^{'}_{ij}$ from $N_{i}$ to $N_{j}$ such that $E_{w}$($P^{'}_{ij}$) $\leq$ $E_{w}$($P_{ij}$). Then, $E_{w}$($P$) $\leq$ $E_{w}$($P_{1,i}$) + $E_{w}$($P^{'}_{i,j}$) + $E_{w}$($P_{j,k}$) which contradicts $P$ as shortest path assumption from $N_{1}$ to $N_{k}$.
\end{proof}

\begin{algorithm}
\caption{Time Slicing Gap-Aware Rendezvous Detection}
\label{TGARD}
\footnotesize
\begin{flushleft}
\hspace{\algorithmicindent}\textbf{Input:}\\
\hspace*{\algorithmicindent} Time Overlap Threshold TO\\
\hspace*{\algorithmicindent} Rest same as Algorithm 1\\
\hspace*{\algorithmicindent}\textbf{Output:}\\
\hspace*{\algorithmicindent} Possible Rendezvous Nodes Hmap $N_{R}$
\end{flushleft}
\footnotesize
\begin{algorithmic}[1]
    \Procedure{:}{}
    \State Rendezvous Node Map RNMap $\leftarrow$ $\emptyset$
    \State {Use Algorithm 1 to construct Subnetwork Map SN Map}
    \ForEach{Sub Network $SN_{k}$ $\in$ Trajectory Gap Pairs $G_{i},G_{j}$}
        \State Construct $Lens_{k}^{G_{i}}$ and $Lens_{k}^{G_{j}}$ for $t_{k}^{R}$ $\in$ [$t_{s}^{R}$,$t_{e}^{R}$]
        \If{$Lens_{k}^{G_{i}}$ $\cap$ $Lens_{k}^{G_{j}}$ $\neq$ $\emptyset$ and $\delta$($E_{w}^{k}$) $\leq$ $\tau$}:
            \State $N_{u}$ $\leftarrow$ $Lens_{k}^{G_{i}}$ $\cap$ $Lens_{k}^{G_{j}}$
            \If{$N_{u}^{k}$ is reachable for Sub Network $SN_{k}$ and $\not\in$ RNMap}:
                \State Compute $N_{u}^{EA}$ and $N_{u}^{LD}$ for both $\alpha$($N_{u}^{G_{i}}$) and $\alpha$($N_{u}^{G_{j}}$)
                \If{$\alpha$($N_{u}^{G_{i}}$) $\cap$ $\alpha$($N_{u}^{G_{j}}$) $\neq$ $\emptyset$ and $\geq$ $TO$}:
                    \State Rendezvous Node Map RNMap[$N_{R}$] $\leftarrow$ $N_{u}$ $\forall$ $t_{k}^{R}$
                \EndIf
            \EndIf
        \EndIf
    \EndFor
    \Return Rendezvous Node Map RNMap
\EndProcedure
\end{algorithmic}
\end{algorithm}

The TGARD algorithm steps are as follows:

\textbf{Step 1:} First we compute a sub-network (SN) map for each gap pair <$G_{i}$,$G_{j}$> using Algorithm 1 and initialize a Rendezvous Node Map RNMap $\leftarrow$ $\emptyset$. For each gap pair <$G_{i}$,$G_{j}$>, we have $k$ sub-networks $SN_{k}$ which generate $Lens_{k}$ from both $G_{i}$ and $G_{j}$ at $t_{k}^{R}$. We then filter out $N_{u}^{k}$ residing within the spatial boundary of $Lens_{k}^{G_{i}} \cap Lens_{k}^{G_{j}}$, where $Lens_{k}^{G_{i}} \cap Lens_{k}^{G_{j}}$ $\neq$ $\emptyset$. In addition, we also check if edge weights $\delta$($E_{w}^{k}$) for sub-network $SN_{k}$ have changed considerably by comparing them with threshold $\tau$. If yes, then we consider all the affected nodes within $Lens_{k}^{G_{i}} \cap Lens_{k}^{G_{j}}$ for computing \textit{early arrival} and \textit{late departure} times to preserve completeness. If not, we re-use the previous shortest path calculations for \textit{early arrival} and \textit{late departure} at $t_{k-1}^{R}$ adhering to the optimal substructure property for a shortest path stated in Theorem \ref{theorum:theorum_SP}.

\textbf{Step 2:} After gathering all $N_{u}$ $\in$ $Lens_{k}^{G_{i}} \cap Lens_{k}^{G_{j}}$, we check if $N_{u}$ is reachable (i.e., $\alpha$($N_{u}$) $\neq$ $\emptyset$) by checking the availability interval [$N_{u}^{EA}$,$N_{u}^{LD}$] $\not\geq$ [$t_{s}$,$t_{e}$] for each gap $G_{i}$ and $G_{j}$. If valid, we then check whether both the availability intervals of $G_{i}$ and $G_{j}$ and their intersection is $\neq$ $\emptyset$ and $\geq$ $TO$. If yes, then $N_{u}^{k}$ qualifies as $N_{R}$ and is saved in a rendezvous node (RN) Map which is later returned as output.


\subsection{A Dual Convergence Approach (DC-TGARD)}
Since time slicing operation is computationally expensive, the Dual Convergence method leverage symmetric property of the ellipse to efficiently reduce the iterations of time slicing used by $TGARD$ while preserving correctness and completeness. We first define an early termination condition based on areal coverage of $Lens_{k}$, where $k$ $\in$ $0 \leq k \leq n$ for a given time frame $t$, where t $\in$ $0 \leq k \leq n$. We then perform \textit{ bi-directional pruning} using ellipse symmetric property and compute $EA$ and $LD$ from both tail-end of the ellipse in parallel to improve computational efficiency.

\textbf{Ellipse Symmetry Property:} Given ellipse with major and minor axis and center (0,0), then it's foci ($\pm$c,0) are equidistant from it's origin at c. Hence, the areal coverage drawn from the lenses centered at ($\pm$c,0) will be the same, permitting an equal number of nodes within the spatial bounds of the lenses. For instance, Figure \ref{fig:Dual Convergence} (b) shows an equal number of nodes residing within the spatial bounds of $Lens_{k}$ and $Lens_{n-k}$ and using Lemma \ref{theorum:theorum1}, time slicing does not leave out any other interpolated nodes (green) bounded within ellipse preserving the completeness of DC-TGARD. 
\begin{figure}[ht]
    \centering
    \includegraphics[width=0.45\textwidth]{./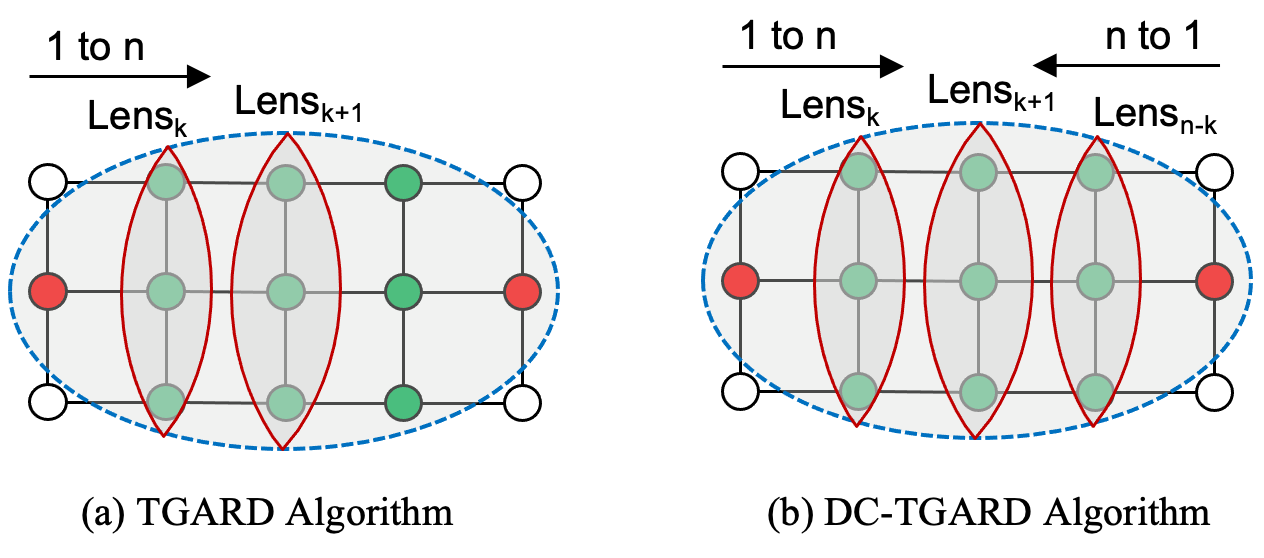}
    \caption{An illustration of TGARD vs Dual Convergence Method}
    \label{fig:Dual Convergence}
\end{figure}

\textbf{Early Stopping Criteria:} The early stopping criteria holds when the property of monotonicity (i.e., non-increasing) is violated while estimating the areal coverage of lenses throughout every time slicing operation from $t_{s}$ to $t_{e}$. For instance, Figure \ref{fig:Dual Convergence} (a), shows the increasing areal-coverage i.e., A($Lens_{k}$) $\leq$ A($Lens_{k+1}$) (given A($Lens_{k}$) as the areal coverage of lens $k$) at $t_{k}$ and $t_{k+1}$ respectively (where, $t_{k+1}$ $\geq$ $t_{k}$). This is due to the increasing (or non-decreasing) length of the minor axis which preserves the monotonicity until the $Lens_{k+1}$ of the ellipse. After $Lens_{k+1}$, the areal coverage start decreasing and thereby violating the monotonic property i.e., (A($Lens_{k}$)$\leq$A($Lens_{k+1}$)$\geq$A($Lens_{i+2}$)) and, resulting in early stopping of the time slicing operation.

\textbf{Dual Convergence Approach:} We need to consider each time-slice operation to preserve completeness of the algorithm. The dual convergence approach introduces \textbf{bi-directional pruning} in conjunction to efficiently converge towards \textit{early stopping criteria}. Hence, a time-slicing operation is performed from both tail ends of the ellipse focii in parallel such that both lenses generated at $t$ and $t_{e}-t$ follows the property of \textbf{ellipse symmetry}. While performing bi-directional pruning, we check the \textit{early stopping criteria} if A($Lens_{i}$) $\geq$ A($Lens_{i+1}$) such that the monotonic property is violated. Figure \ref{fig:Dual Convergence} shows an example of dual convergence where baseline TGARD in Figure \ref{fig:Dual Convergence} (a) performs a linear time slicing operation from $Lens_{k}$, $Lens_{k+1}$ to $Lens_{n}$ covering all the interpolated nodes. Figure \ref{fig:Dual Convergence} (b) shows bi-directional pruning at $Lens_{k}$ and $Lens_{n-k}$ where k $\in$ [1,n]. The areal coverage of $Lens_{k}$ i.e., A($Lens_{k}$) and A($Lens_{n-k}$) using the property of ellipse symmetry computed in parallel and further checking if A($Lens_{k}$) $\geq$ A($Lens_{k+1}$) which gives an early stopping condition. For instance, in Figure \ref{fig:Dual Convergence} DC-TGARD terminates when A($Lens_{k+1}$) $\geq$ A($Lens_{n-k}$) at $t$=2 as compared to TGARD in Figure \ref{fig:Dual Convergence} (a) which terminates at $t$=3. Hence, the areal coverage pruning is more efficient in DC-TGARD as compared to TGARD. Hence the pruning is similar to to a \textit{bitonic array} where the peak areal coverage using $Max Overlap$ variable. 

\begin{algorithm}
\caption{Dual Convergence Time Slicing Gap-Aware Rendezvous Detection (DC-TGARD)}
\label{DCTGARD}
\footnotesize
\begin{flushleft}
\hspace{\algorithmicindent}\textbf{Input:}\\
\hspace*{\algorithmicindent} Same as Algorithm 2\\
\hspace*{\algorithmicindent}\textbf{Output:}\\
\hspace*{\algorithmicindent} Possible Rendezvous Nodes List $N_{R}$
\end{flushleft}
\footnotesize
\begin{algorithmic}[1]
    \Procedure{:}{}
    \State {Use Algorithm 1 to construct Subnetwork Map SN Map}
    \State Rendezvous Node Map RNMap $\leftarrow$ $\emptyset$ and Max Overlap $\leftarrow$ $\emptyset$
    \State Construct Sub Network $SN_{k}$ and $SN_{n-k}$ for $t^{R}_{k}$ and $t^{R}_{n-k}$ respectively 
    \ForEach{Sub Network $SN_{k}$ and $SN_{n-k}$ $\in$ Trajectory Gap Pairs <$G_{i},G_{j}$>}
    \State Construct $Lens_{k}^{G_{i}}$, $Lens_{k}^{G_{j}}$ $\in$ $t_{k}^{R}$ and $Lens_{n-k}^{G_{i}}$, $Lens_{n-k}^{G_{j}}$ $\in$ $t_{n-k}^{R}$
        \While{Max Overlap $\not\geq$ Area($Lens_{k}^{G_{i}}$ $\cap$ $Lens_{k}^{G_{j}}$)}
            \If{$Lens_{k}^{G_{i}}$ $\cap$ $Lens_{k}^{G_{j}}$ $\neq$ $\emptyset$ and $\delta$($E_{w}^{k}$, $E_{w}^{n-k}$) $\leq$ $\tau$}
                \State $N_{u}^{k}$ $\leftarrow$ $Lens_{k}^{G_{i}}$ $\cap$ $Lens_{k}^{G_{i}}$ and $N_{u}^{n-k}$ $\leftarrow$ $Lens_{n-k}^{G_{i}}$ $\cap$ $Lens_{n-k}^{G_{j}}$
                \If{$N_{u}^{k}$ $\in$ $SN_{k}$ and $N_{u}^{n-k}$ $\in$ $SN_{n-k}$ reachable and $\not\in$ RNMap}:
                    \State Compute $N_{u}^{EA}$, $N_{u}^{LD}$ $\forall$ $t_{k}^{R}$ and $t_{n-k}^{R}$ $\in$ $\alpha$($N_{u}^{G_{i}}$) and $\alpha$($N_{u}^{G_{j}}$)
                    \If{$\alpha$($N_{u}^{G_{i}}$) and $\alpha$($N_{u}^{G_{j}}$) $\neq$ $\emptyset$ and $\geq$ $TO$ $\forall$ $t_{k}^{R}$ and $t_{n-k}^{R}$}:
                        \State Rendezvous Node Map RNMap[$N_{R}$] $\leftarrow$ $N_{u}^{k}$ and $N_{u}^{n-k}$
                    \EndIf
                \EndIf
            \EndIf
        \EndWhile
    \EndFor
    \State \Return Rendezvous Node Map RNMap
\EndProcedure
\end{algorithmic}
\end{algorithm}





The DC-TGARD algorithm steps are as follows:



\textbf{Step 1:} After generating Sub-Network Map (SN Map) via Algorithm 1, we first initialize Rendezvous Node Map RNMap and Max Overlap variable as $\emptyset$. We then generate sub-networks $SN_{k}$ and $SN_{n-k}$ for each $t_{k}^{R}$ and $t_{n-k}^{R}$ and generate $Lens_{k}$ and $Lens_{n-k}$ $\subseteq$ $SN_{k}$ and $SN_{n-k}$ respectively. We then check if $Lens_{k}^{G_{i}}, Lens_{n-k}^{G_{i}} \cap Lens_{k}^{G_{j}}, Lens_{n-k}^{G_{j}}$ $\leftarrow$ $\emptyset$ and simultaneously check $\delta$($E_{w}^{k}$, $E_{w}^{n-k}$) with threshold $\tau$. Rest of the steps are similar to TGARD instead we simultaneously calculate every variables for early arrival, late departures and availability intervals along with their respective conditions.

\textbf{Step 2:} We update the Max Overlap variable with Area of $Lens^{G_{i}}_{k}$ $\cap$ $Lens^{G_{j}}_{k}$ or ($Lens^{G_{i}}_{n-k}$ $\cap$ $Lens^{G_{j}}_{n-k}$). Finally, we check the early termination condition (i.e., current $Lens^{G_{i}}_{k}$ $\cap$ $Lens^{G_{j}}_{k}$ and $Lens^{G_{i}}_{n-k}$ $\cap$ $Lens^{G_{j}}_{n-k}$ $\geq$ Max Overlap. If yes, then the loop has reaches to the \textit{peak element} and algorithm terminates.

\section{Theoretical Evaluation}
\label{sec:TheoreticalEvaluation}
\subsection{Correctness and Completeness} In this section, we provide a theoretical analysis of the correctness and completeness of the proposed algorithms.
\begin{lemma}
TGARD and DC-TGARD algorithms are correct.
\end{lemma}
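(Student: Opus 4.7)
The plan is to prove correctness by showing that every node returned by either algorithm satisfies the formal definition of a possible rendezvous node, namely that (a) it lies within the geometric region bounded by $Lens_{k}^{G_{i}} \cap Lens_{k}^{G_{j}}$ for some sampled $t_{k}^{R}$ in the overlapping time interval $[t_{s}^{R},t_{e}^{R}]$, and (b) it satisfies the sufficient temporal condition of Equation~\ref{sufficient}, i.e., $\alpha(N_{u},G_{i}) \cap \alpha(N_{u},G_{j}) \geq TO$. Since correctness is about avoiding false positives, I will argue that each output node must pass each of these checks, and that these checks together imply membership in the possible rendezvous set defined in Section~\ref{problem_definition}.

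For TGARD, I will walk through Algorithm~\ref{TGARD} line by line. A node $N_{u}$ enters the output RNMap only after (i) it lies in $Lens_{k}^{G_{i}} \cap Lens_{k}^{G_{j}}$ (line 7), which by Lemma~\ref{lemma:lemma2} implies $N_{u} \in Ellipse_{1} \cap Ellipse_{2}$, so the spatial rendezvous constraint of Lemma~\ref{lemma:lemma1} is satisfied; (ii) it is reachable, i.e., $\alpha(N_{u}) \neq \emptyset$ for both gaps, which together with the earliest-arrival/latest-departure construction guarantees the existence of feasible paths from the gap endpoints consistent with the maximum speeds $MS_{i}$ and $MS_{j}$; and (iii) the intersection of availability intervals is at least $TO$, which is precisely Equation~\ref{sufficient}. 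The reuse of prior shortest-path computations when $\delta(E_{w}^{k}) \leq \tau$ preserves correctness by Theorem~\ref{theorum:theorum_SP} (optimal substructure), since unchanged subpaths retain their shortest-path status and therefore produce the same $N_{u}^{EA}$ and $N_{u}^{LD}$ values.

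For DC-TGARD, I will argue that the dual-convergence approach computes exactly the same per-slice checks as TGARD, but only on slices indexed by $k$ and $n-k$ simultaneously. Since each check (lens-intersection membership, reachability, availability-interval overlap with $TO$) is the same as in TGARD, no node qualifies for RNMap that would not also qualify under TGARD's criteria. The only semantic departure is the early-termination guard \emph{Max Overlap} $\not\geq$ Area($Lens_{k}^{G_{i}} \cap Lens_{k}^{G_{j}}$); this controls \emph{when} slicing stops but does not relax any admission test, so it cannot introduce false positives. Thus DC-TGARD is correct whenever TGARD is.

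The main obstacle I expect is carefully handling the time-dependent edge weights: I need to show that the $\delta$-based reuse of shortest paths across adjacent slices does not admit a node whose true $\alpha(N_{u})$ under the updated weights would fail the $TO$ test. I would close this gap by formalizing that whenever $\delta(E_{w}^{k}) \leq \tau$ the affected subpaths are unchanged and so by Theorem~\ref{theorum:theorum_SP} the inherited $N_{u}^{EA}$ and $N_{u}^{LD}$ remain valid; otherwise the algorithm explicitly recomputes, so correctness is maintained in both branches. A brief symmetric argument for the parallel slice $n-k$ in DC-TGARD, invoking the ellipse symmetry property, completes the proof.
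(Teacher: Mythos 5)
Your proposal takes a genuinely different route from the paper. The paper's own proof is essentially a termination argument: finitely many trajectories and points yield finitely many gap pairs, hence finitely many operations, so both algorithms terminate; it then adds an informal remark that the quality of the output hinges on the time-overlap threshold (low $TO$ gives false positives, high $TO$ gives false negatives) and concludes correctness ``for a given overlap threshold.'' You instead prove output soundness: every node admitted to RNMap has passed the lens-intersection test (tied back to Lemma~\ref{lemma:lemma2} and Lemma~\ref{lemma:lemma1}), the reachability test, and the availability-overlap test of Equation~\ref{sufficient}, so each reported node satisfies the formal definition of a possible rendezvous node; you then argue DC-TGARD applies the identical admission tests and that its early-termination guard only affects when slicing stops, not which nodes are admitted. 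Your version buys a concrete no-false-positive guarantee relative to the definitions in Section~\ref{sec:Proposed Approach}, which the paper's proof does not actually establish, while the paper's version supplies the termination claim that your plan omits entirely --- you should add a short finiteness argument (finitely many gap pairs, finitely many slices per pair, finitely many nodes per slice) to get total correctness. One caveat on your handling of shortest-path reuse: the condition $\delta(E_{w}^{k}) \leq \tau$ means the edge weights changed by at most $\tau$, not that they are literally unchanged, so the inherited $N_{u}^{EA}$ and $N_{u}^{LD}$ are only approximately valid and Theorem~\ref{theorum:theorum_SP} by itself does not close that gap; the paper glosses over the same point, but if you want your soundness claim to be exact you must either state it up to a $\tau$-dependent tolerance on the $TO$ test or assume $\tau$ small enough that the availability intervals are unaffected.
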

\begin{proof}
Given a finite set of $|K|$ trajectories with maximum $N$ number of finite points resulting in a maximum bound of $|K|\times|N|$ finite trajectory points. Hence, finite number of trajectory gaps $G_{i}$ generated while performing pre-processing for finite trajectory gap-pair generation resulting a finite number of operations will be performed by \textit{TGARD and DC-TGARD} algorithms to terminate at a finite time. The correctness of the algorithm also depends upon the extent of the overlap of the two availability intervals $\alpha$(1) $\in$ $G_{1}$ and $\alpha$(2) $\in$ $G_{2}$. For instance, a low overlap threshold results in multiple false positive rendezvous whereas a high overlap threshold results in more false negatives. Hence, both \textit{TGARD and DC-TGARD} algorithms are correct for a given overlap threshold.
\end{proof}

\begin{lemma}
TGARD and DC-TGARD algorithms are complete.
\end{lemma}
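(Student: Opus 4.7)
The plan is to show completeness in the standard sense: every node $N_u$ that satisfies the rendezvous criterion of Equation~\ref{sufficient} for a gap pair $\langle G_i, G_j \rangle$ is eventually inserted into the output map \emph{RNMap} by the algorithm. I would first establish this for TGARD and then leverage that result, together with ellipse symmetry, to establish it for DC-TGARD.

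For TGARD, I would argue as follows. By Lemma~\ref{lemma:Intersection_Ellipse}, every possible rendezvous point lies in $Ellipse_1 \cap Ellipse_2$, and by Theorem~\ref{theorum:theorum1} this intersection equals the union of $Lens_1(t) \cap Lens_2(t)$ as $t$ ranges over $[t_s^R, t_e^R]$. Since Algorithm~\ref{Preprocess} retains every node of the given spatial network that lies inside $Ellipse_1 \cap Ellipse_2$ when forming the sub-network map, and since TGARD enumerates every sampled sub-network $SN_k$ and evaluates every node $N_u \in Lens_k^{G_i} \cap Lens_k^{G_j}$, every candidate rendezvous node is inspected at some iteration. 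The computation of $N_u^{EA}$ and $N_u^{LD}$ via bi-directional Dijkstra is exact (whenever $\delta(E_w^k) \le \tau$ the cached shortest-path substructure is reused legitimately by Theorem~\ref{theorum:theorum_SP}; otherwise recomputation restores correctness), so the availability intervals $\alpha(N_u^{G_i})$ and $\alpha(N_u^{G_j})$ are exact. Therefore, whenever the overlap condition of Equation~\ref{sufficient} holds, the node is added to \emph{RNMap}. No qualifying node is skipped, establishing completeness of TGARD.

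For DC-TGARD, the delicate step is to show that bi-directional pruning together with the early-stopping criterion never discards a valid rendezvous node. I would organise this around two observations. First, by the ellipse symmetry property invoked in the paper, the lens $Lens_{n-k}$ is the mirror image of $Lens_k$ about the center of the ellipse, so the node sets swept by the symmetric iterates $k$ and $n-k$ together cover the same region that TGARD's one-sided sweep would cover up to time $t_{n-k}^R$. Hence the outputs of the two sweeps, taken jointly, coincide with TGARD's output restricted to $t \le t_k^R$ and $t \ge t_{n-k}^R$. Second, I would prove the bitonicity of $k \mapsto \mathrm{Area}(Lens_k^{G_i} \cap Lens_k^{G_j})$: the minor-axis length of each lens increases monotonically until the midpoint of the gap interval and decreases monotonically afterwards, so the area is a unimodal function of $k$. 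Consequently, when the converging indices $k$ and $n-k$ meet (i.e.\ when the monotonicity check $\mathrm{Max\,Overlap} \ge \mathrm{Area}(Lens_k \cap Lens_k)$ fires), the two sweeps have together visited every time-slice in $[t_s^R, t_e^R]$. Combined with the TGARD argument above, every qualifying node has been examined and inserted into \emph{RNMap} when it meets the threshold, so DC-TGARD is complete.

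The main obstacle I anticipate is the second observation for DC-TGARD: rigorously justifying that the areal-coverage sequence is unimodal (bitonic) and that the joint sweep of the symmetric indices covers the entire gap interval before the early-stop fires. In particular, one must rule out pathological edge-weight distributions or sampling schedules $K$ that might cause the monotonicity test to trigger prematurely on one side while the other side still contains an unvisited qualifying lens. I would address this by tying the area function directly to the geometric lens shape (independent of $E_w$, since $Lens_k$ is defined by the prism geometry) and by arguing that the sampling is synchronised on the two sides, so the bitonic peak is reached simultaneously and termination occurs only after the two sweeps have met.
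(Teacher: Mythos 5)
There is a genuine gap in your TGARD step: you ground the coverage claim on the assertion that, by Theorem~\ref{theorum:theorum1}, $Ellipse_1 \cap Ellipse_2$ \emph{equals} the union over $t$ of $Lens_1(t) \cap Lens_2(t)$. Theorem~\ref{theorum:theorum1} gives only one inclusion (each instantaneous lens intersection is a \emph{subset} of the ellipse intersection), and the equality is false in general: a point can lie in both ellipses while its reachability windows in the two gaps are disjoint, so it belongs to no common lens at any single instant. Indeed, this strict containment is the entire point of the time-slicing model --- it is exactly why nodes such as $N_{10}$, $N_{17}$, $N_{12}$, $N_{19}$ in Figure~\ref{fig:ProblemStatmeent}(c) are filtered out. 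The repair is to argue completeness directly from the definition of a qualifying node: if $N_u$ satisfies Equation~\ref{sufficient}, then any time $t$ in the overlap of $\alpha(N_u,G_i)$ and $\alpha(N_u,G_j)$ witnesses that $N_u$ is simultaneously reachable in both gaps, hence $N_u \in Lens^{G_i}(t) \cap Lens^{G_j}(t)$, so the node is examined at any sampled slice falling in that window. This also exposes a caveat you (and the paper) leave implicit: a sampled instant $t_k^R$ must actually land inside the overlap window, which requires the sampling rate $K$ to be fine enough relative to $TO$; without that assumption a qualifying node can be missed between slices.

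For context, the paper's own proof is far more cursory than your proposal: it simply states that both algorithms consider all nodes within the lenses and geo-ellipses, invokes Theorem~\ref{theorum:theorum1} (lens $\subseteq$ ellipse), and concludes that every node in the gap ``participates'' in time slicing; it says nothing about sampling granularity, the $\tau$-based reuse of shortest paths, or DC-TGARD's early stopping. Your second part --- ellipse symmetry plus bitonicity of the lens-area sequence so that the two converging sweeps jointly cover every slice before the early-stop fires --- addresses a real completeness concern the paper skips entirely, and your worry about premature termination is legitimate: the while-condition tests the area of the intersection of lenses drawn from \emph{two different} ellipses, whose unimodality does not follow from the single-ellipse minor-axis argument the paper gives. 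So apart from the mis-cited equality (which is fixable as above), your treatment is more careful than, and goes beyond, the paper's own argument.
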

\begin{proof}
The proposed \textit{TGARD and DC-TGARD} algorithms consider all the nodes within the spatial bounds of the lenes and their respective geo-ellipses. Theorem \ref{theorum:theorum1} proved that a time slice lens is a subset of geo-ellipse and does not exceed it's limit. This ensures that each node within gap is participating during time slicing operation. Hence, both \textit{TGARD and DC-TGARD} algorithms are complete.
\end{proof}

\subsection{Asymptotic Analysis}
The time complexity of the comparison operations for each algorithm are as follows: 

\textbf{Exacting Candidate Gap Pairs:} For generating candidate pairs, $\binom{|K|}{2}$ trajectories must be selected where each such pair will have $|N-1|\times|N-1|$ comparisons. Hence, the total number of comparisons will be $\binom{|K|}{2}\times|N-1|\times|N-1|$, which results in an asymptotic worst case of $O(|K|^{2}\times|N|^{2})$.

\textbf{Creating Sub-Networks (SNs):} For a given subgraph $SN$ with $|E|$ $\subseteq$ $N\times N$ edges, the edge weights needs to be calculated $|T|$ times where $E_{w}$ is calculated at constant time. Hence the overall time complexity for calculating $E_{w}$ is O ($|E|\times |T|$).

\textbf{TGARD Algorithm:} Given a sub-network $SN$ for $N$ nodes, we perform a linear scan for each time slice $|T|$ to gather intermediate node $N_{u}$ where $N_{u}$ $\leq$ $N$ resulting in $O(|K|\times|T|)$ operations. For computing Early Arrival and Late Departure for $N_{i}$, we run a \textit{bi-directional Dijkstra's} algorithm with $O(|N_{i}|+|E|)$. Hence, the algorithm perform $O(|K|\times|T|)$ + $O(|N_{u}|\times|E|)$ operations. 

\textbf{DC-TGARD Algorithm:} The worst case of DC-TGARD will be similar to $TGARD$ except we 
perform bi-directional searches while computing the time slice operations and the early termination condition guarantees the algorithm stops in $T/2$ steps. This in turn reduces shortest path computations, which enhances computational efficiency in high density regions.

\section{Experimental Evaluation}
\label{sec:ExperimentalEvaluation}

The goal of the experiments was to validate the benefit of the proposed time slicing approach for reducing the search space of rendezvous detection. We evaluated the solution quality by comparing the proposed DC-TGARD against space-time prism methods \cite{uddin2017assembly}. We also compared the execution time of TGARD and DC-TGARD under different parameters. Details shown in Fig \ref{fig:ExperimentDesign}.

\begin{figure}[ht]
    \centering
    \includegraphics[width=0.45\textwidth]{./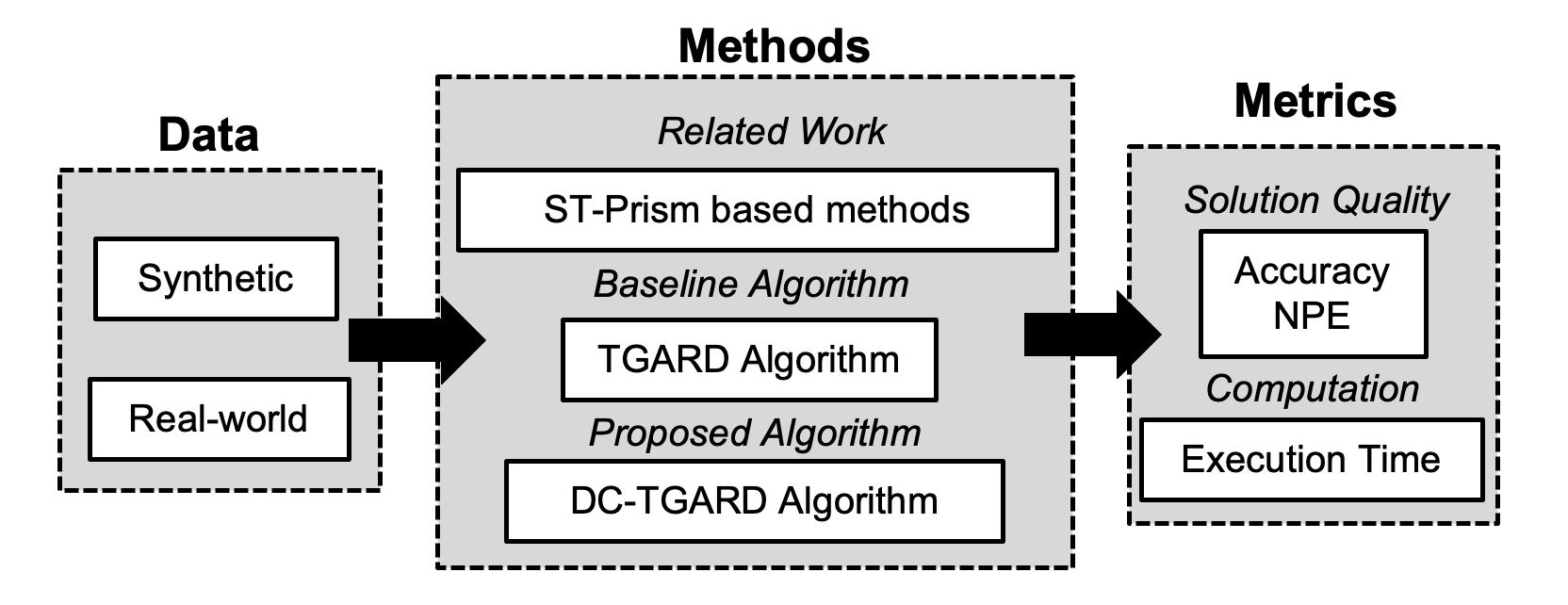}
    \caption{Experiment Design}
    \label{fig:ExperimentDesign}
\end{figure}

\textbf{Real World Data:} We used  Geolife \cite{zheng2010geolife} dataset based on Beijing Road Networks where each location has latitude, longitude, and height with a variety of travel modes (e.g., walking, driving, etc.). In this paper, we limited our evaluation to driving patterns due to their accordance with road network topology (e.g., road segment and intersection). In addition, we further simulated certain trajectories by adding more objects and case scenarios of rendezvous patterns to test the effectiveness and scalability of the proposed methods.

\textbf{Synthetic Data Generation}: For \textit{solution quality} experiment, we lacked ground truth data (i.e., information on whether a node was involved in a rendezvous or not). Therefore, we evaluated the proposed algorithms on synthetic data derived from the Geolife dataset. First, we gather trajectories on a fixed study area of the Beijing road network with mobility data. We then pre-processed the trajectory points with gap durations greater than 30 mins and randomly classified each gap based on whether $N_{r}$ was reachable or was not using the proposed methods. 

\textbf{Computing Resources:} We performed our experiments on a system with a 2.6 GHz 6-Core Intel Core i7 processor and 16 GB 2667 MHz DDR4 RAM.

\subsection{Solution Quality}
To assess solution quality, we considered bounding efficiency and accuracy. We developed an efficiency metric called node pruning efficiency (NPE), to measure the tightness of the proposed filter in the trajectory gap pairs. We defined NPE as the ratio of the nodes of a total study area to the nodes within a bounded region (Eq. \ref{ineq10}): 
\begin{equation}
    \label{ineq10}
    NPE = \frac{Nodes\ in\ Total\ Study\ Area}{Nodes\ in\ the\ Bounded\ Region}
\end{equation}

\setlength{\belowcaptionskip}{-10pt}
\begin{figure}
\begin{multicols}{2}
\label{fig:SolutionQualityNPE}
    \includegraphics[width=\linewidth]{./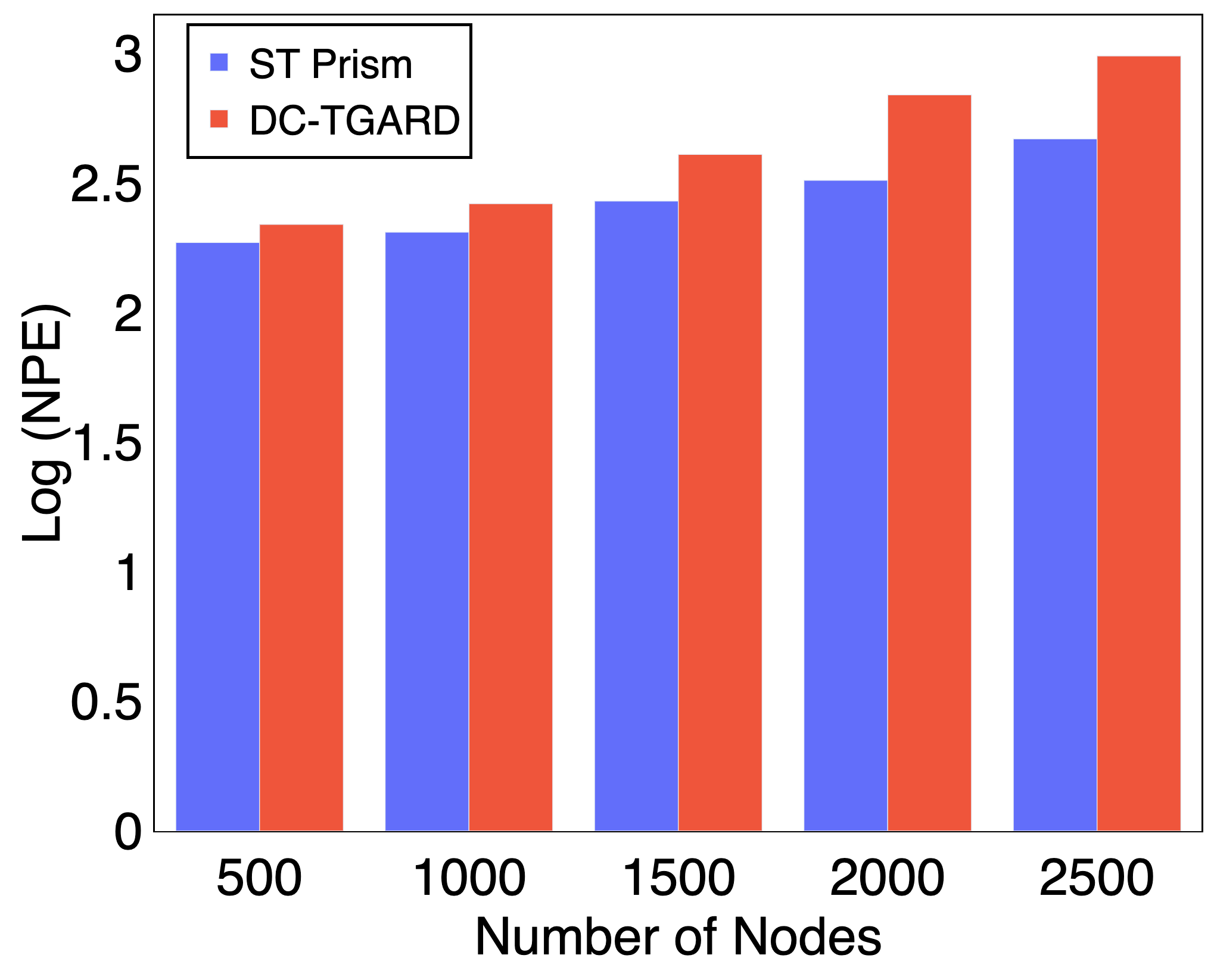}\par
    (a) Number of Nodes\\
    \includegraphics[width=\linewidth]{./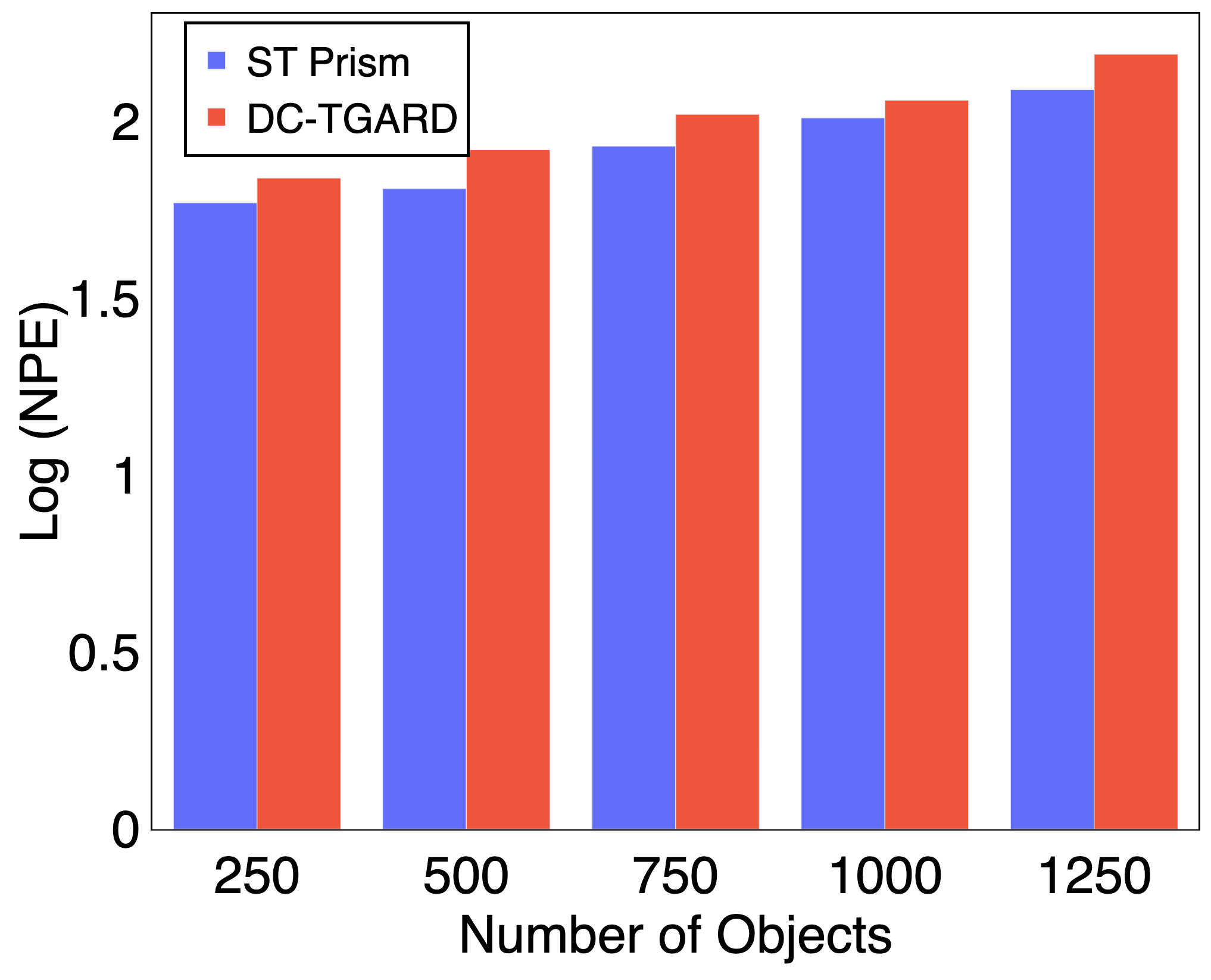}\par 
    (b) Number of Objects\\
    \end{multicols}
\begin{multicols}{2}
    \includegraphics[width=\linewidth]{./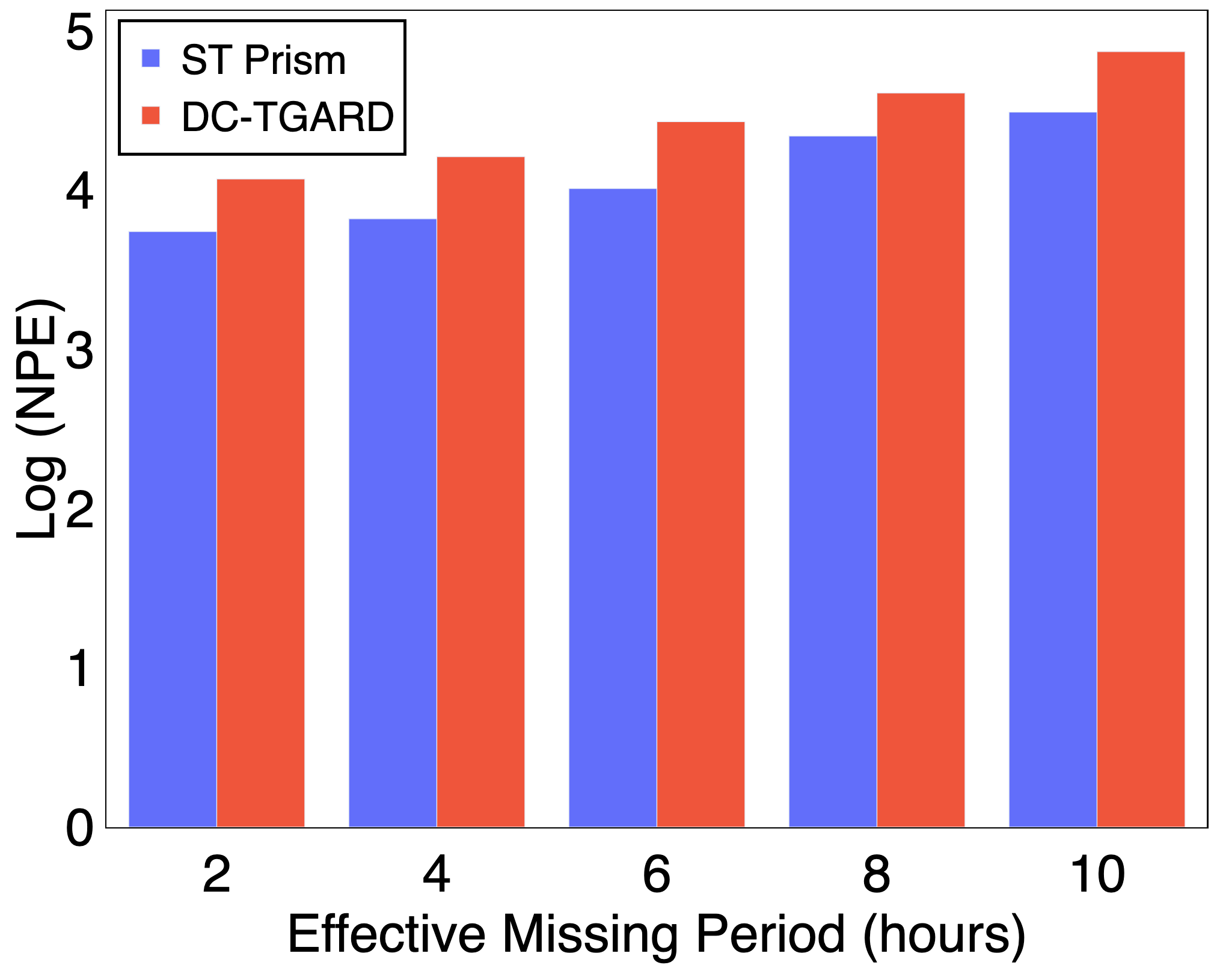}\par
    (c) Effective Missing Period\\
    \includegraphics[width=\linewidth]{./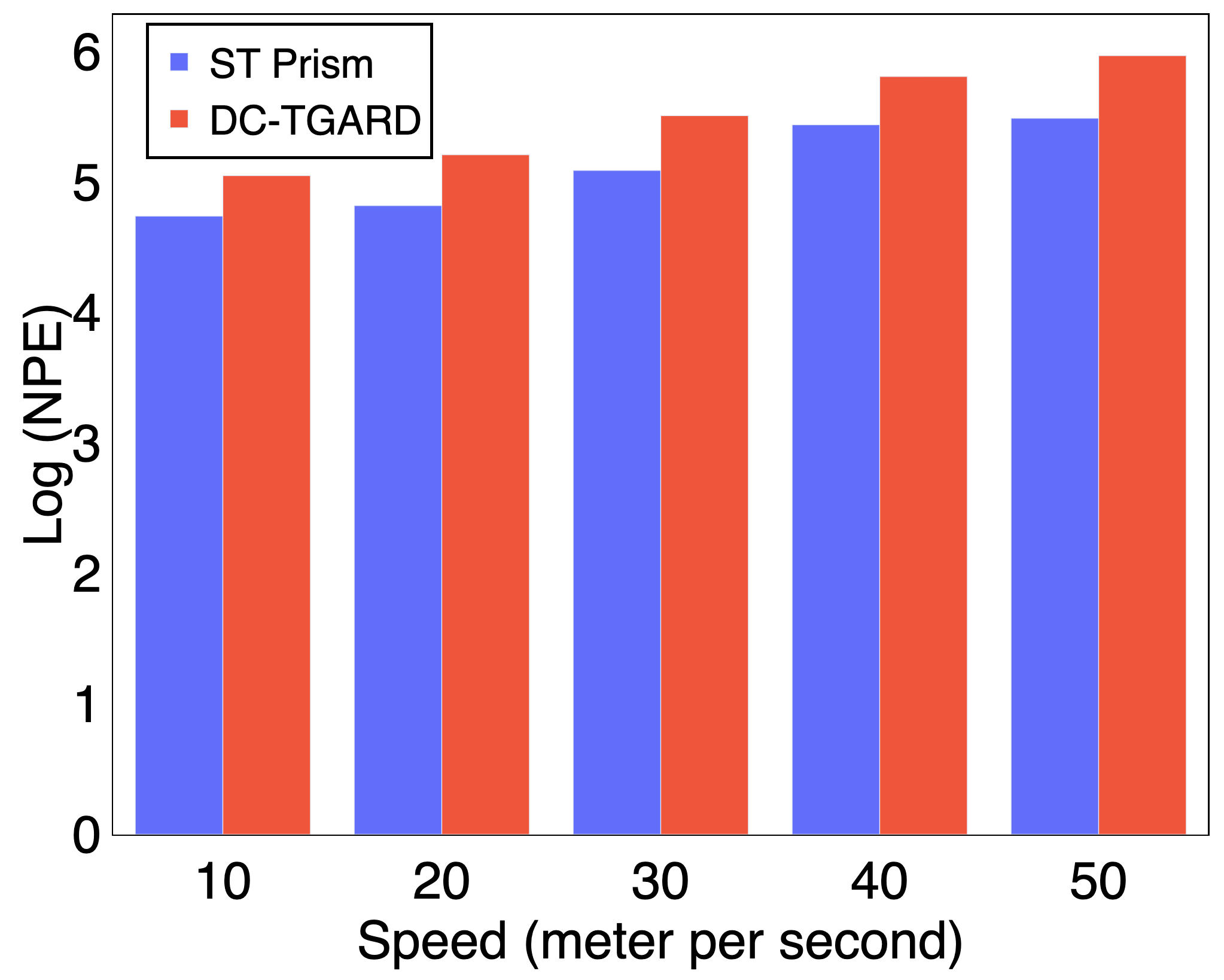}\par
    (d) Speed\\
\end{multicols}
\caption{DC-TGARD performs better than a space-time prism method under different parameters}
\end{figure}

\textbf{Nodes Pruning Efficiency (NPE):} We fixed the study area and varied the Nodes (N), Effective Missing Period (EMP), Speed (MS) and Number of Objects (O) within each trajectory gap $G_{i}$. Figure 9 shows the results for the space-time prism approach and the proposed time slicing approach. As can be seen in Figure 9 (a), DC-TGARD has better node pruning efficiency (NPE) as we increase the number of nodes (since node density increases). A similar trend is seen in Figure 9 (b) except the effectiveness of time slicing is not as significant as we increase the number of objects of a fixed study area. Figure 9 (c) and 9 (d) again show the time slicing effectiveness is superior as compared to baseline approach as we increase the duration of the effective missing period (EMP) of gap.


\textbf{Accuracy:} We used synthetic data with manually labelled ground truth data about whether or not the objects were involved in rendezvous for a fixed number of objects (i.e., 2000) within a fixed size network of 5000 nodes. We then varied different parameters including Time Overlap Threshold ($TO$). 

\textbf{(1) Number of Objects:} We set the EMP to $4$ hours, MS to 30 $m/s$, Time Overlap Threshold (T) to 30 mins and we varied the number of objects from $500$ to $1250$. Figure 10 (a) shows DC-TGARD gives a more accurate representation of the minimum travel time from the start node and end node of the trajectory gap compared to space-time prism based methods. This results in more accurate estimates of $N_{r}$ and reduces the number of false negatives, which proved to be quite common in the space-time prism based approaches \cite{uddin2017assembly}.

\textbf{(2) Effective Missing Period (EMP):} Next, we set the number of objects at $500$, the number of nodes of $5 x {10}^{4}$,  MS at 30 $m/s$, TO threshold to 30 mins, and varied the EMP threshold from $2$ to $10$ hours. As shown in Figure 10 (b), we find that DC-TGARD outperforms space time prisms. Higher EMP estimates increase the size of the geo-ellipses resulting in higher false positives. 

\textbf{(3) Speed:} We kept the number of objects and EMP constant at $1000$ and $4$ hours respectively, the TO threshold at $30$ mins and increased the maximum speed MS from $10$ to $50$ in $meter/second$ units. Figure 10 (c) shows DC-TGARD is more accurate as compared to ST-Prisms. The reason is that high speed outputs a greater number of potential rendezvous. This increases false positives on both algorithms but DC-TGARD filters out \textit{non-reachable} nodes, thereby increasing accuracy.

\textbf{(4) Time Overlap Threshold (TO):} We set  EMP at $4$ hours, MS at 30 $m/s$ and varied the $TO$ threshold from $20$ min to $100$ mins. Again, DC-TGARD outperforms space-time prism as we increase TO threshold (Figure 10 (d)). We also see that greater TO threshold results in less $N_{r}$, resulting in more false negatives which are more accurately captured by the proposed DC-TGARD algorithm. 
\setlength{\belowcaptionskip}{-10pt}
\begin{figure}
\begin{multicols}{2}
    \includegraphics[width=\linewidth]{./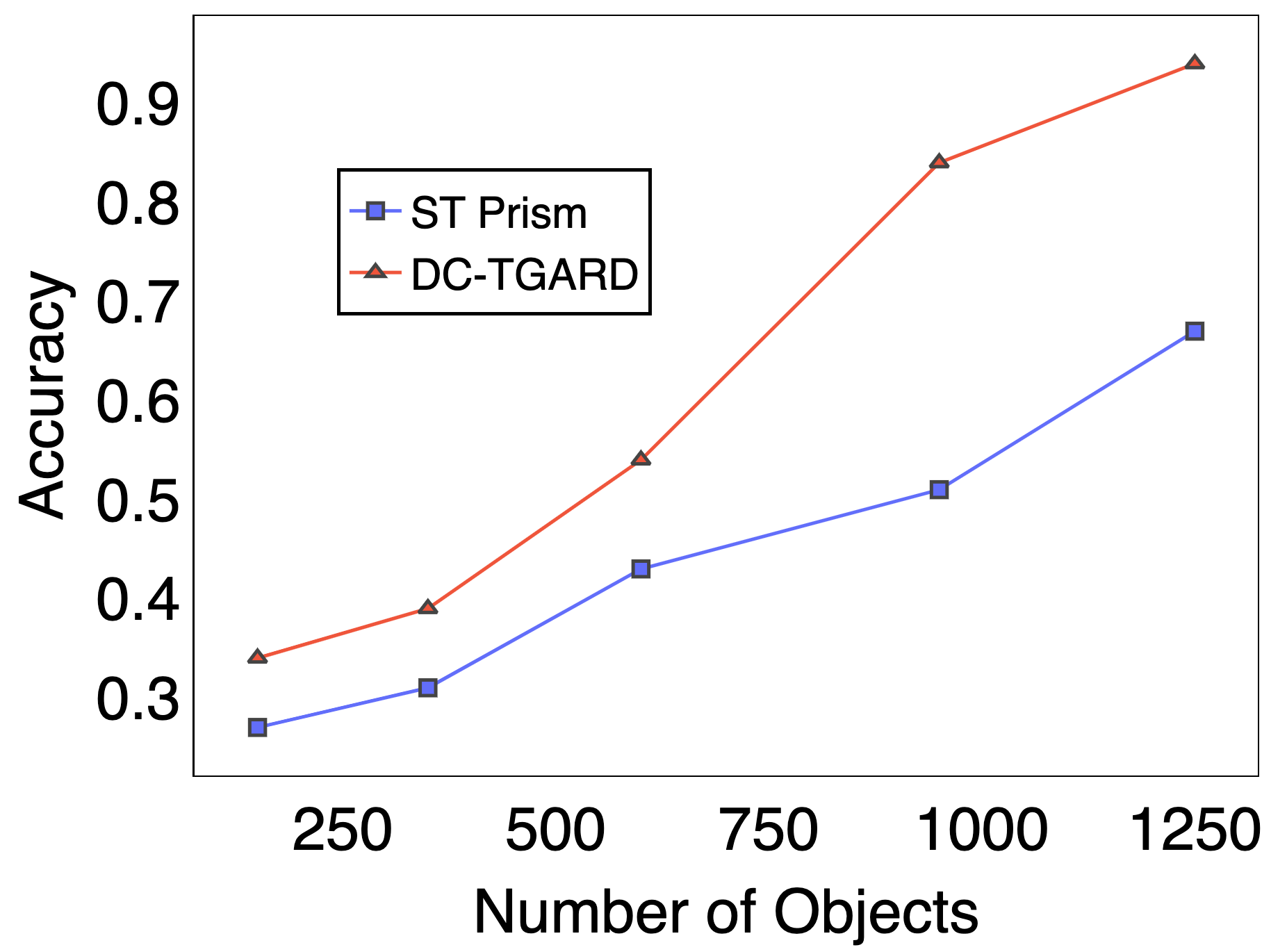}\par
    (a) Number of Objects\\
    \includegraphics[width=\linewidth]{./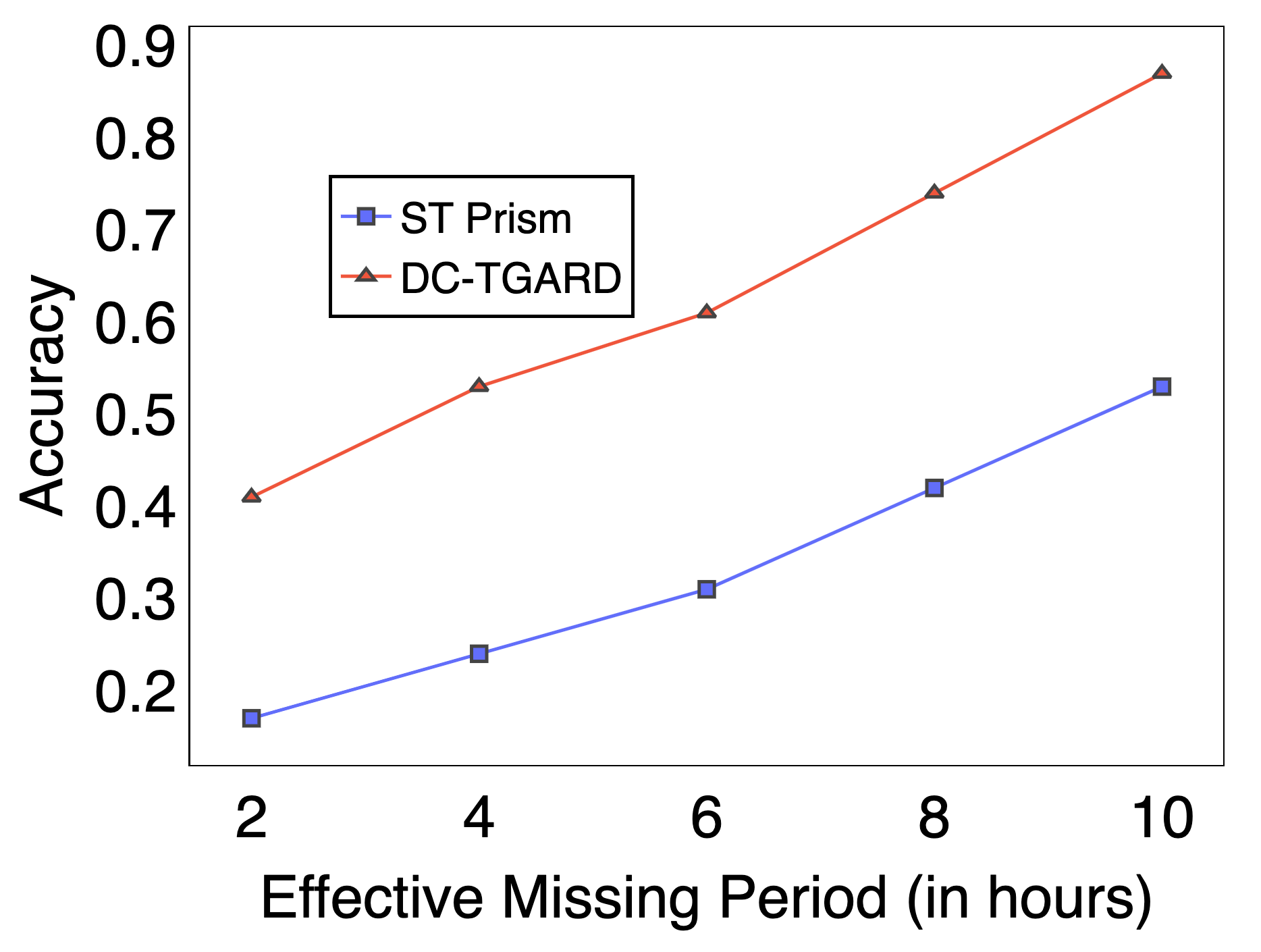}\par 
    (b) Effective Missing Period\\
    \end{multicols}
\begin{multicols}{2}
    \includegraphics[width=\linewidth]{./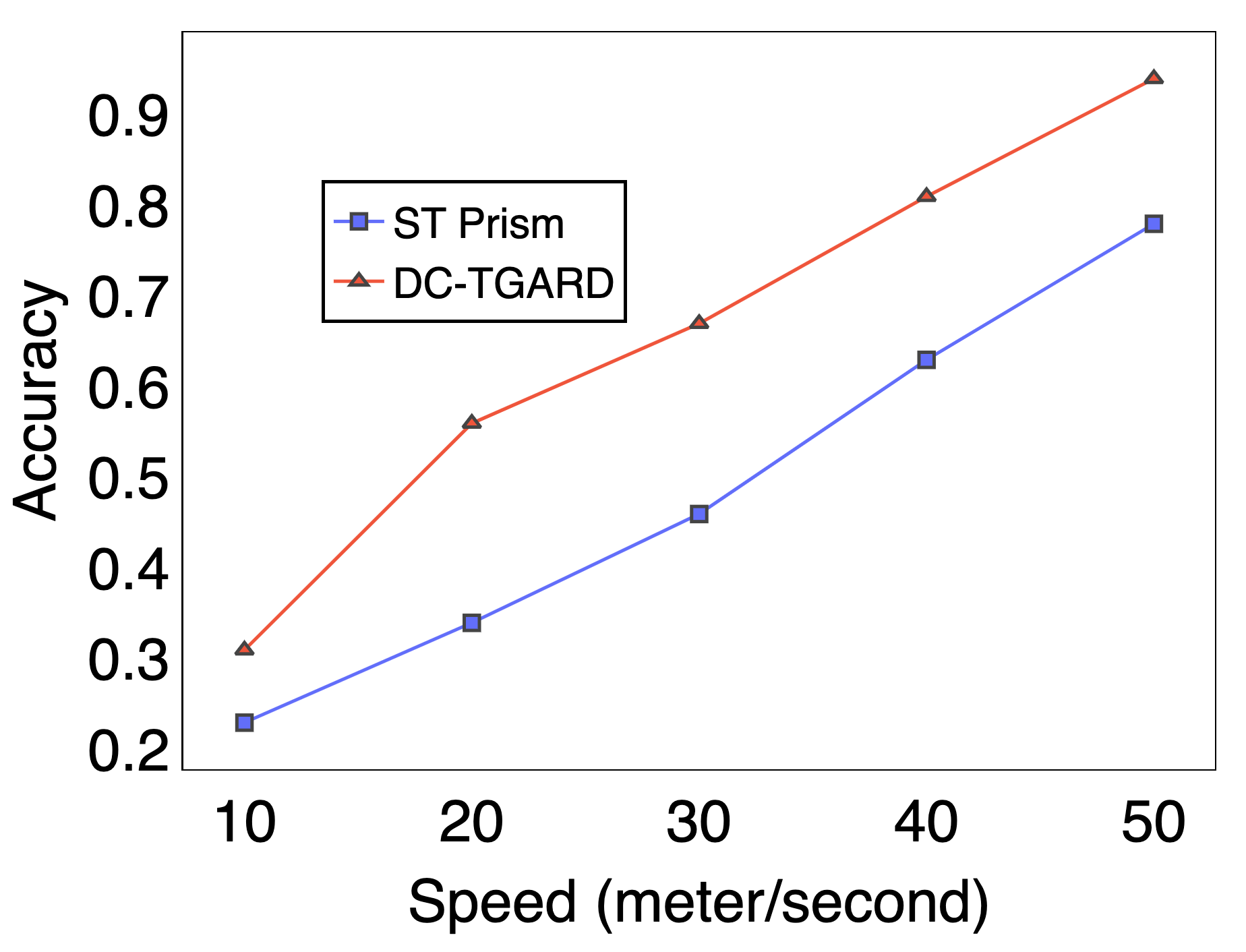}\par
    (c) Speed\\
    \includegraphics[width=\linewidth]{./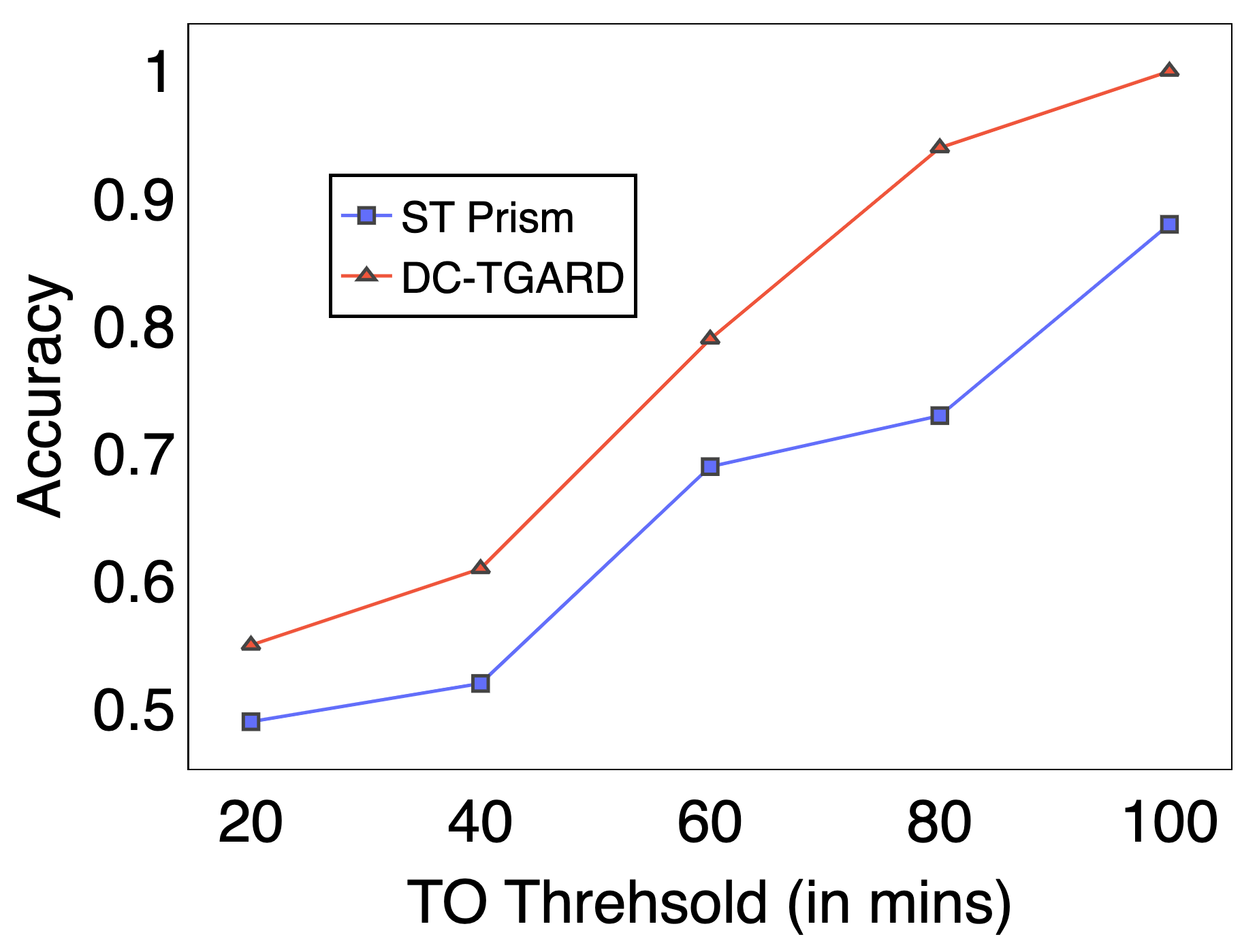}\par
    (d) TO Threshold (TO)\\
\end{multicols}
\label{fig:SolutionQuality}
\caption{DC-TGARD is more accurate ST Prism under varying parameters}
\end{figure}


\subsection{Computational Efficiency}
We then compared the proposed DC-TGARD against the baseline TGARD algorithm on computation efficiency using Number of Nodes, Number of Objects, Effective Missing Period, Speed MS and Time Overlap Threshold (TO).

\textbf{(1) Number of Objects:} We set the EMP to $4$ hours, MS to 30 $m/s$, Time Overlap Threshold (T) to $30$ mins and we varied the number of objects from $500$ to $2500$. Figure 11 (a) shows that DC-TGARD always runs faster than TGARD algorithm it's dual-convergence time-slicing operation leverages early stopping criteria. 

\textbf{(2) Effective Missing Period (EMP):} We set the number of objects at $500$, the number of nodes of $5 x {10}^{4}$,  MS at 30 $m/s$, TO threshold to 30 mins, and varied the EMP from $30$ to $90$ minutes. As shown in Figure 11 (b), we find that DC-TGARD outperforms TGARD with increasing EMP. The bi-directional pruning of the dual convergence approach reduces the time slicing operations.

\textbf{(3) Speed:} We kept the number of gaps and EMP constant at $1000$ and $4$ hours respectively, TO threshold 30 mins and increased the maximum speed MS from $10$ to $50$ in $m/s$ units. Figure 11 (c) shows DC-TGARD is faster than TGARD.  Hence, the dual convergence of DC-TGARD helps but not that significantly if we increase the MS. The reason is similar to $EMP$ results since higher maximum speed produces larger ellipses except non-reachable nodes the non-reachable nodes decreases for both the algorithms which results in more shortest path computations for both algorithms. However, DC-TGARD outperform due to it's dual convergence property, the execution times for both algorithms are linear as the speed increases. 

\textbf{(4) Time Overlap Threshold (TO):} We again kept the number of objects at 1000, the EMP at $4$ hours, MS at 30 $m/s$ but this time we increased the TO threshold from 20 min to 100 mins. Again, Figure 11 (d) shows DC-TGARD outperforms TGARD as we increase TO threshold i.e., the dual convergence nature of DC-TGARD efficiently filter out more $N_{r}$ and fewer shortest path computations are performed.

\textbf{(5) Number of Nodes:} We did a high density test by setting the number of objects to $1000$ and varied the number of nodes from $500$ to $2500$. For low density test, we set the number of objects to $100$ and varied number of nodes from $200$ to $1000$. For both experiments, EMP was set to $4$ hours, MS to 30 $m/s$ and Time Overlap Threshold (TO) to $40$ mins for a fixed study area. Figure 5 (e) and (f) shows that DC-TGARD always outperforms the TGARD algorithm. It's time slicing operations are effective for both high and low density road networks.


\begin{figure}
\label{fig:computationalefficiency}
\begin{multicols}{2}
    \includegraphics[width=\linewidth]{./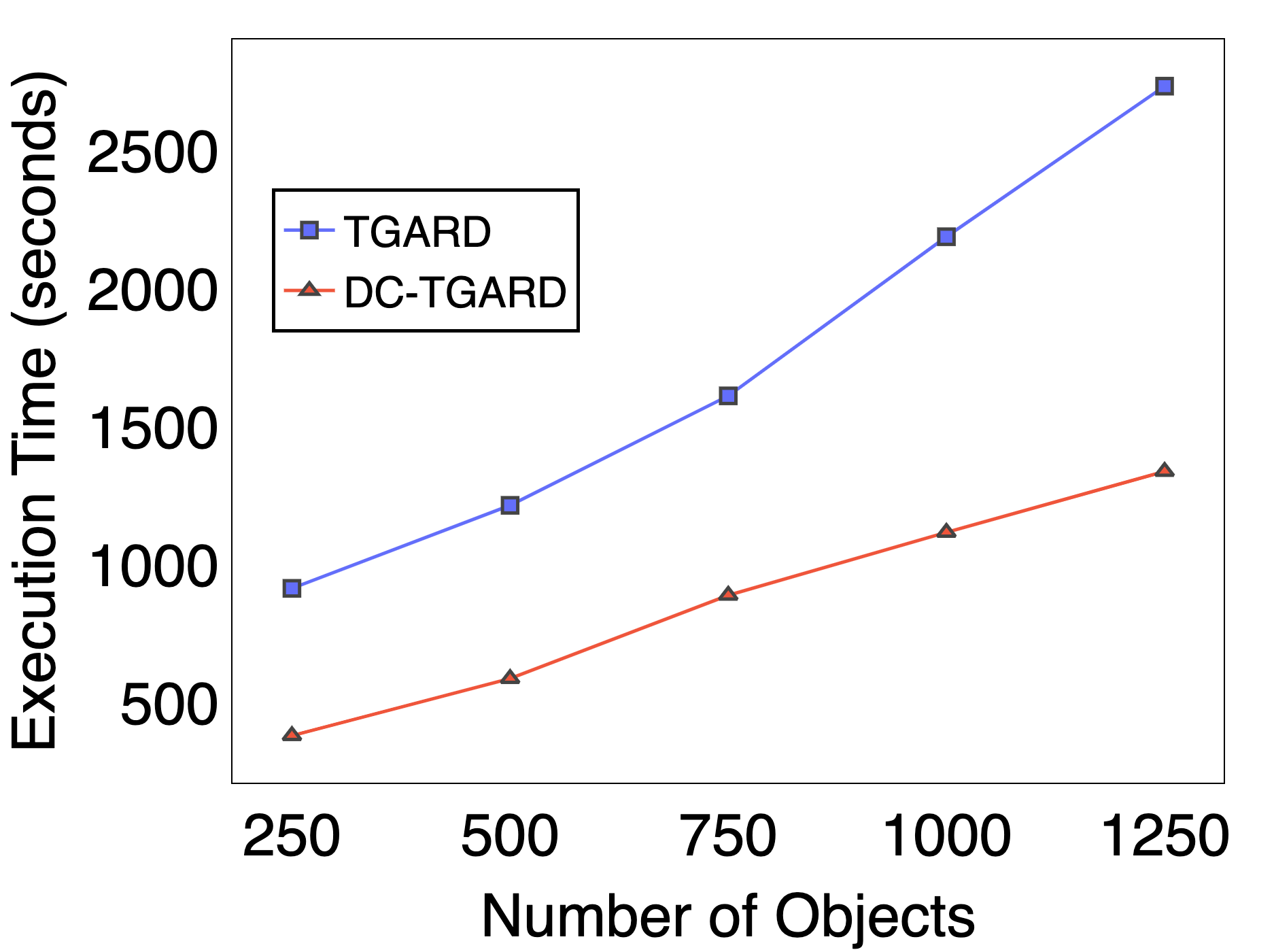}\par
    (a) Number of Objects\\
    \includegraphics[width=\linewidth]{./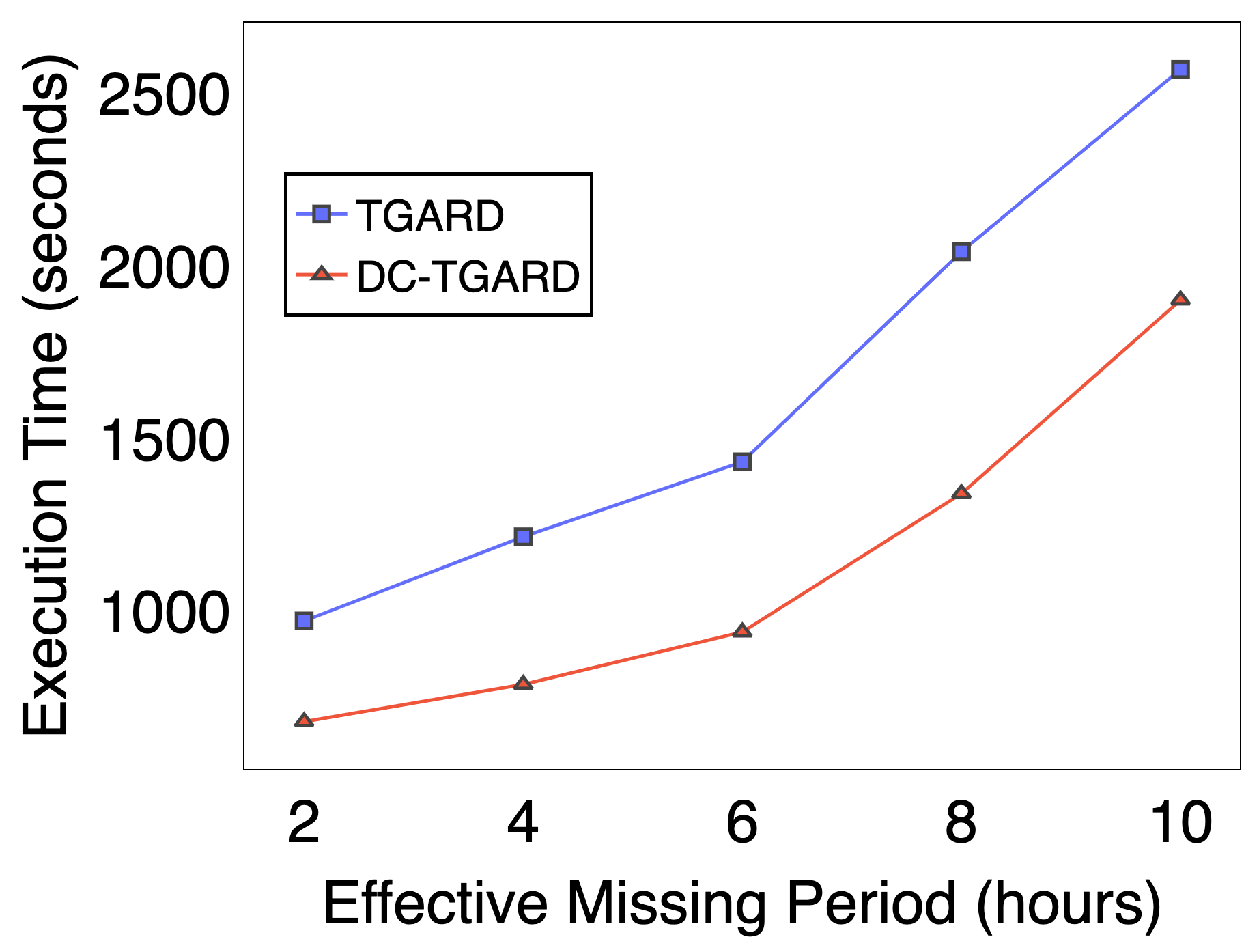}\par 
    (b) Effective Missing Period\\
\end{multicols}
\begin{multicols}{2}
    \includegraphics[width=\linewidth]{./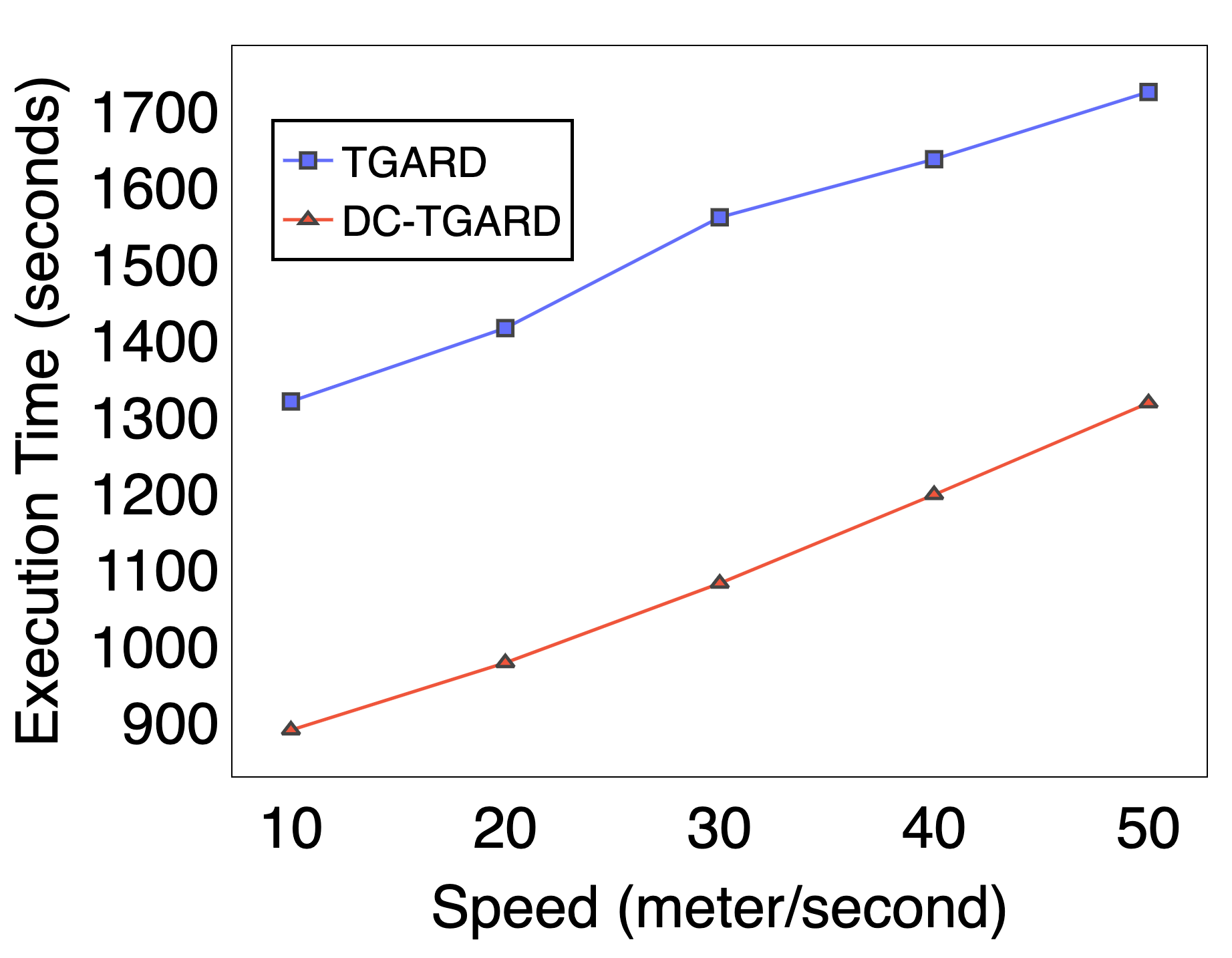}\par
    (c) Speed\\
    \includegraphics[width=\linewidth]{./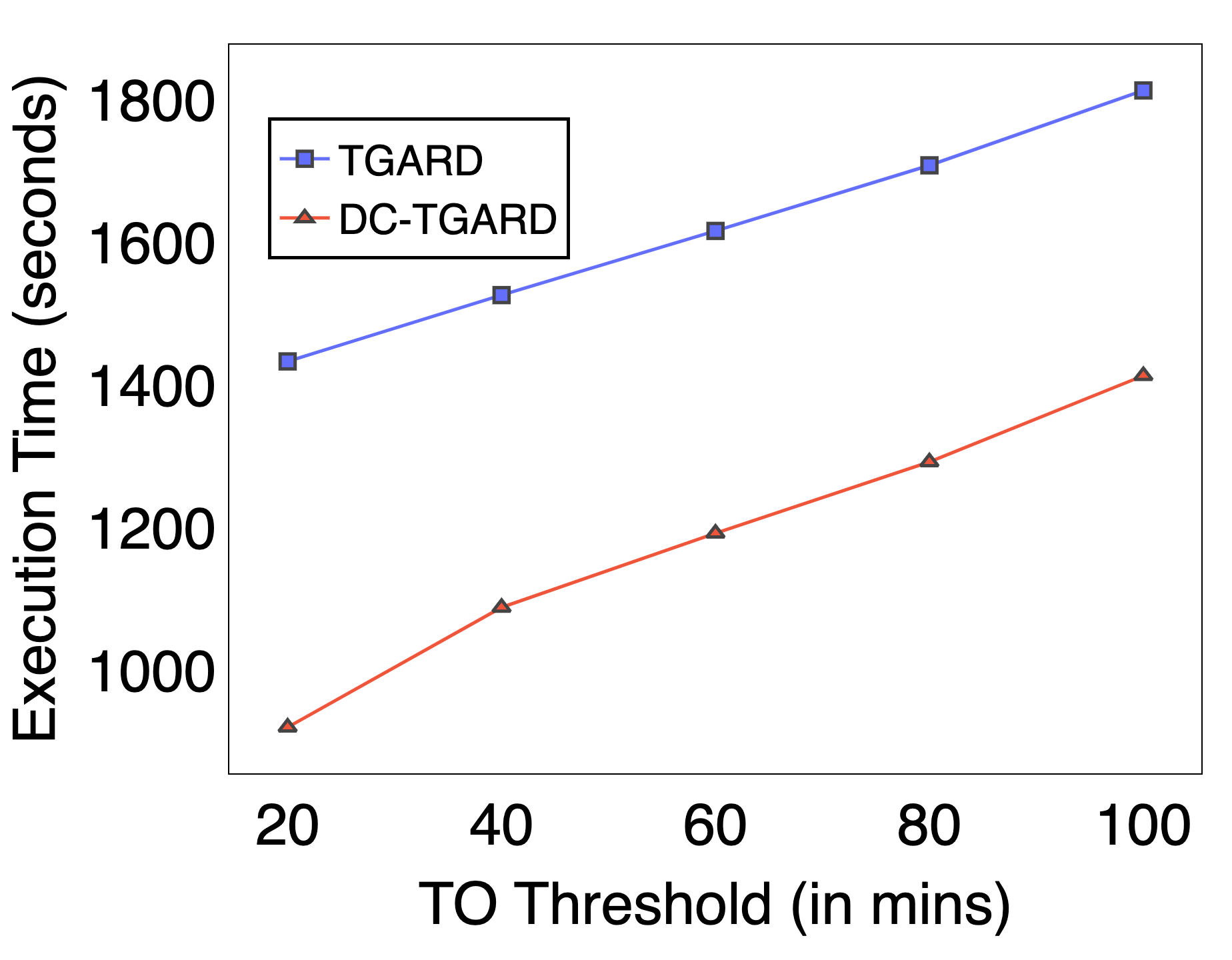}\par
    (d) TO Threshold (TO)\\
\end{multicols}
\begin{multicols}{2}
    \includegraphics[width=\linewidth]{./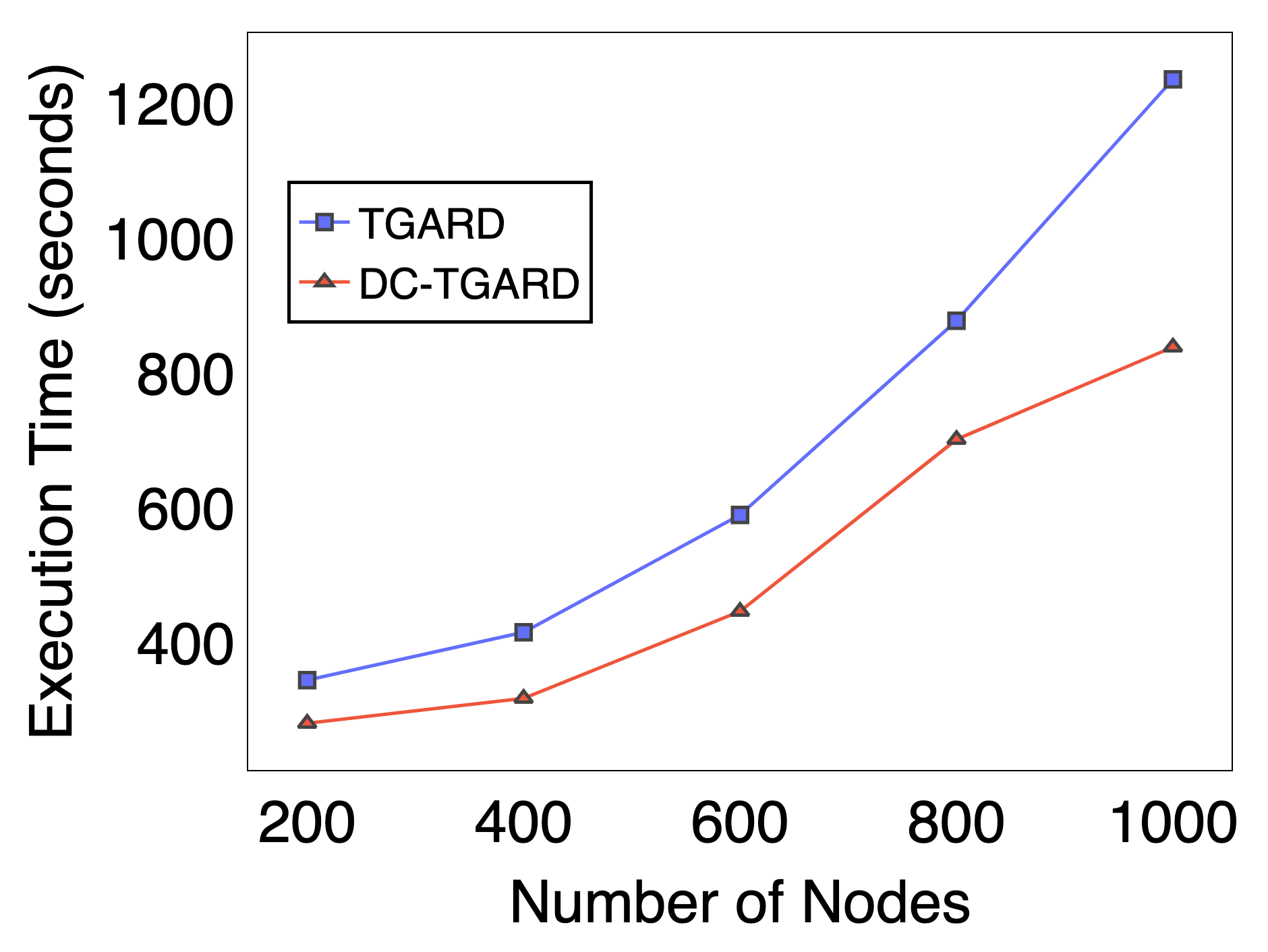}\par
    (e) Number of Nodes (Low Density)\\
    \includegraphics[width=\linewidth]{./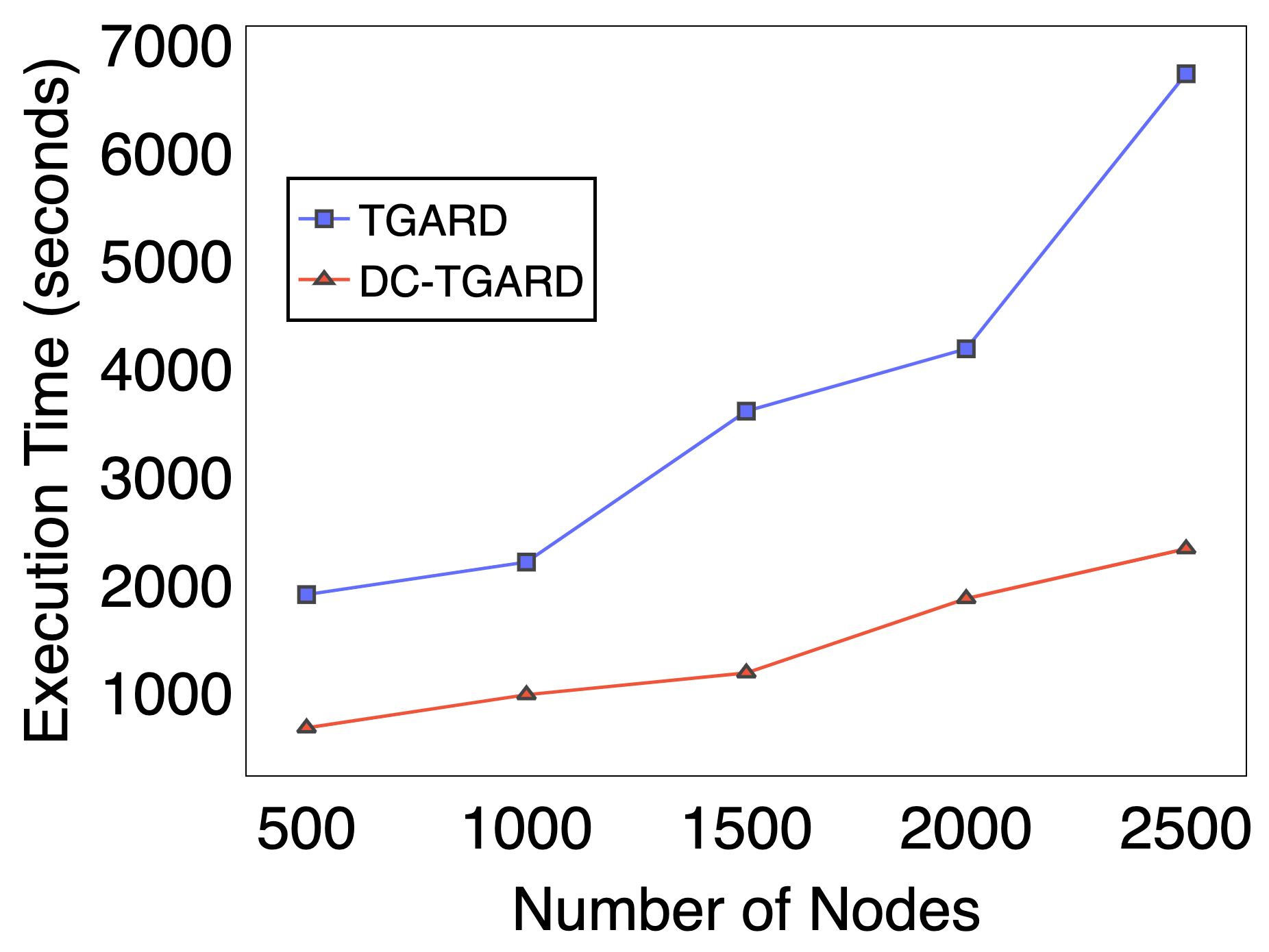}\par
    (f) Number of Nodes (High Density)\\
\end{multicols}
\caption{DC-TGARD is more efficient than TGARD Methods under different parameters}
\end{figure}
\section{Other Related Work}
\label{section:Related_work}
The literature of trajectory mining includes a broad overview of movement patterns \cite{dodge2008towards} and a taxonomy of spatial mining methods used in various application domains \cite{zheng2015trajectory}. In trajectory data management, specific frameworks have analyzed trajectory gaps via indexing methods (e.g., hierarchical trees \cite{kollios1999indexing,chen2008st2b}, grids \cite{patel2004stripes} but they are not designed to detect movement patterns. Other works have modelled regions of uncertainty \cite{trajcevski2004managing,trajcevski2003probabilistic} via snapshot models etc. More realistic solutions are based on geometric models such as cylindrical \cite{trajcevski2010uncertain} and space-time prism models \cite{miller1991modelling, kuijpers2011analytic,kim2003space} that construct an areal interpolation of the gaps using coordinates and a maximum speed of the objects. More recently, a kinetic prism \cite{kuijpers2017kinetic} approach showed improved estimation by considering other physical parameters (e.g., acceleration). However, no studies have addressed rendezvous behavior patterns within trajectory gaps.

In spatial network research, many methods consider \cite{greenfeld2002matching,lou2009map,yuan2010interactive} uncertainty in trajectory gaps by a map matching the potential routes taken by a moving object in a road network. Many studies are based on deterministic \cite{zheng2011probabilistic,zheng2012reducing} and probabilistic \cite{krumm2006predestination,krumm2013destination} methods that consider historic trajectories to find potential routes in a road network topology. However, none of these works consider rendezvous behavioral patterns in spatial networks. Recent work \cite{uddin2017assembly}, studies rendezvous detection in the \textit{static-spatial network} via space-time prisms based on objects and duration using type-2 uncertainty (geo-ellipse) resulting rather loose bounds. In this work, we are identifying rendezvous nodes via a time-slicing method\cite{sharma2022analyzing} which tighten the bounds of rendezvous areas. We further consider dynamic edge weights \cite{zhao2008algorithm,demiryurek2011online,ding2008finding} for better solution quality.

\section{Conclusion and Future Work}
\label{section:conclusion_future_work}
We study the problem of identifying a set of possible rendezvous locations within a trajectory gap in a given spatial network. We theoretically study a time slicing model which provides tighter bounds as compared to traditional space-time prism models. We also proposed a TGARD algorithm for effectively finding rendezvous nodes while filtering non-reachable nodes in the spatial networks to improve solution quality. In addition, we refined our baseline approach with a Dual Convergence TGARD (DC-TGARD) algorithm using bi-directional pruning to further improve computational efficiency. Experimental results on both synthetic and real-world datasets shows that DC-TGARD is faster than TGARD and has better solution quality than other related work based methods.

\textbf{Future Work:} In the future, we plan to investigate the adjustments of the proposed algorithm needed to handle possibly negative travel costs.
We also plan to examine whether parallelism can be used to further improve query processing. Finally, we intend to develop a more accurate prism model to calibrate realistic physics-based parameters (e.g., acceleration).
\section*{Acknowledgments}

This research is funded by an academic grant from the National Geospatial-Intelligence Agency (Award No. HM0476-20-1-0009, Project Title: Identifying Aberration Patterns in Multi-attribute Trajectory Data with Gaps). Approved for public release, 22-536. We also want to thank Kim Koffolt and the spatial computing re- search group for their helpful comments and refinements.

\bibliographystyle{ACM-Reference-Format}
\bibliography{manuscript}


\begin{thebibliography}{29}


\ifx \showCODEN    \undefined \def \showCODEN     #1{\unskip}     \fi
\ifx \showDOI      \undefined \def \showDOI       #1{#1}\fi
\ifx \showISBNx    \undefined \def \showISBNx     #1{\unskip}     \fi
\ifx \showISBNxiii \undefined \def \showISBNxiii  #1{\unskip}     \fi
\ifx \showISSN     \undefined \def \showISSN      #1{\unskip}     \fi
\ifx \showLCCN     \undefined \def \showLCCN      #1{\unskip}     \fi
\ifx \shownote     \undefined \def \shownote      #1{#1}          \fi
\ifx \showarticletitle \undefined \def \showarticletitle #1{#1}   \fi
\ifx \showURL      \undefined \def \showURL       {\relax}        \fi
\providecommand\bibfield[2]{#2}
\providecommand\bibinfo[2]{#2}
\providecommand\natexlab[1]{#1}
\providecommand\showeprint[2][]{arXiv:#2}

\bibitem[BBC(2017)]%
        {transhipment}
\bibfield{author}{\bibinfo{person}{BBC}.} \bibinfo{year}{2017}\natexlab{}.
\newblock \bibinfo{title}{North Korea: South seizes ship amid row over illegal
  oil transfer}.
\newblock
  \bibinfo{howpublished}{\url{https://www.bbc.com/news/world-asia-42510783}}.
\newblock


\bibitem[Chen et~al\mbox{.}(2008)]%
        {chen2008st2b}
\bibfield{author}{\bibinfo{person}{Su Chen}, \bibinfo{person}{Beng~Chin Ooi},
  \bibinfo{person}{Kian-Lee Tan}, {and} \bibinfo{person}{Mario~A Nascimento}.}
  \bibinfo{year}{2008}\natexlab{}.
\newblock \showarticletitle{ST2B-tree: a self-tunable spatio-temporal B+-tree
  index for moving objects}. In \bibinfo{booktitle}{\emph{Proceedings of the
  2008 ACM SIGMOD international conference on Management of data}}.
  \bibinfo{pages}{29--42}.
\newblock


\bibitem[Demiryurek et~al\mbox{.}(2011)]%
        {demiryurek2011online}
\bibfield{author}{\bibinfo{person}{Ugur Demiryurek}, \bibinfo{person}{Farnoush
  Banaei-Kashani}, \bibinfo{person}{Cyrus Shahabi}, {and}
  \bibinfo{person}{Anand Ranganathan}.} \bibinfo{year}{2011}\natexlab{}.
\newblock \showarticletitle{Online computation of fastest path in
  time-dependent spatial networks}. In \bibinfo{booktitle}{\emph{International
  Symposium on Spatial and Temporal Databases}}. Springer,
  \bibinfo{pages}{92--111}.
\newblock


\bibitem[Ding et~al\mbox{.}(2008b)]%
        {ding2008finding}
\bibfield{author}{\bibinfo{person}{Bolin Ding}, \bibinfo{person}{Jeffrey~Xu
  Yu}, {and} \bibinfo{person}{Lu Qin}.} \bibinfo{year}{2008}\natexlab{b}.
\newblock \showarticletitle{Finding time-dependent shortest paths over large
  graphs}. In \bibinfo{booktitle}{\emph{Proceedings of the 11th international
  conference on Extending database technology: Advances in database
  technology}}. \bibinfo{pages}{205--216}.
\newblock


\bibitem[Ding et~al\mbox{.}(2008a)]%
        {ding2008efficient}
\bibfield{author}{\bibinfo{person}{Hui Ding}, \bibinfo{person}{Goce
  Trajcevski}, {and} \bibinfo{person}{Peter Scheuermann}.}
  \bibinfo{year}{2008}\natexlab{a}.
\newblock \showarticletitle{Efficient similarity join of large sets of moving
  object trajectories}.
\newblock  (\bibinfo{year}{2008}), \bibinfo{pages}{79--87}.
\newblock


\bibitem[Dodge et~al\mbox{.}(2008)]%
        {dodge2008towards}
\bibfield{author}{\bibinfo{person}{S Dodge}, \bibinfo{person}{R Weibel}, {and}
  \bibinfo{person}{A. Lautensch{\"u}tz}.} \bibinfo{year}{2008}\natexlab{}.
\newblock \showarticletitle{Towards a taxonomy of movement patterns}.
\newblock \bibinfo{journal}{\emph{Information visualization}}
  \bibinfo{volume}{7}, \bibinfo{number}{3-4} (\bibinfo{year}{2008}),
  \bibinfo{pages}{240--252}.
\newblock


\bibitem[Greenfeld(2002)]%
        {greenfeld2002matching}
\bibfield{author}{\bibinfo{person}{Joshua~S Greenfeld}.}
  \bibinfo{year}{2002}\natexlab{}.
\newblock \showarticletitle{Matching GPS observations to locations on a digital
  map}. In \bibinfo{booktitle}{\emph{Transportation Research Board 81st Annual
  Meeting}}, Vol.~\bibinfo{volume}{22}.
\newblock


\bibitem[Gunturi et~al\mbox{.}(2011)]%
        {gunturi2011critical}
\bibfield{author}{\bibinfo{person}{Venkata Gunturi}, \bibinfo{person}{Ernesto
  Nunes}, \bibinfo{person}{KwangSoo Yang}, {and} \bibinfo{person}{Shashi
  Shekhar}.} \bibinfo{year}{2011}\natexlab{}.
\newblock \showarticletitle{A critical-time-point approach to all-start-time
  lagrangian shortest paths: A summary of results}. In
  \bibinfo{booktitle}{\emph{International Symposium on Spatial and Temporal
  Databases}}. Springer, \bibinfo{pages}{74--91}.
\newblock


\bibitem[Kim and Kwan(2003)]%
        {kim2003space}
\bibfield{author}{\bibinfo{person}{Hyun-Mi Kim} {and} \bibinfo{person}{Mei-Po
  Kwan}.} \bibinfo{year}{2003}\natexlab{}.
\newblock \showarticletitle{Space-time accessibility measures: A
  geocomputational algorithm with a focus on the feasible opportunity set and
  possible activity duration}.
\newblock \bibinfo{journal}{\emph{Journal of geographical Systems}}
  \bibinfo{volume}{5}, \bibinfo{number}{1} (\bibinfo{year}{2003}),
  \bibinfo{pages}{71--91}.
\newblock


\bibitem[Kollios et~al\mbox{.}(1999)]%
        {kollios1999indexing}
\bibfield{author}{\bibinfo{person}{George Kollios}, \bibinfo{person}{Dimitrios
  Gunopulos}, {and} \bibinfo{person}{Vassilis~J Tsotras}.}
  \bibinfo{year}{1999}\natexlab{}.
\newblock \showarticletitle{On indexing mobile objects}. In
  \bibinfo{booktitle}{\emph{Proceedings of the eighteenth ACM
  SIGMOD-SIGACT-SIGART Symposium on Principles of Database Systems}}.
  \bibinfo{pages}{261--272}.
\newblock


\bibitem[Krumm et~al\mbox{.}(2013)]%
        {krumm2013destination}
\bibfield{author}{\bibinfo{person}{John Krumm}, \bibinfo{person}{Robert Gruen},
  {and} \bibinfo{person}{Daniel Delling}.} \bibinfo{year}{2013}\natexlab{}.
\newblock \showarticletitle{From destination prediction to route prediction}.
\newblock \bibinfo{journal}{\emph{Journal of Location Based Services}}
  \bibinfo{volume}{7}, \bibinfo{number}{2} (\bibinfo{year}{2013}),
  \bibinfo{pages}{98--120}.
\newblock


\bibitem[Krumm and Horvitz(2006)]%
        {krumm2006predestination}
\bibfield{author}{\bibinfo{person}{John Krumm} {and} \bibinfo{person}{Eric
  Horvitz}.} \bibinfo{year}{2006}\natexlab{}.
\newblock \showarticletitle{Predestination: Inferring destinations from partial
  trajectories}. In \bibinfo{booktitle}{\emph{UbiComp}}. Springer,
  \bibinfo{pages}{243--260}.
\newblock


\bibitem[Kuijpers et~al\mbox{.}(2011)]%
        {kuijpers2011analytic}
\bibfield{author}{\bibinfo{person}{Bart Kuijpers}, \bibinfo{person}{Rafael
  Grimson}, {and} \bibinfo{person}{Walied Othman}.}
  \bibinfo{year}{2011}\natexlab{}.
\newblock \showarticletitle{An analytic solution to the alibi query in the
  space--time prisms model for moving object data}.
\newblock \bibinfo{journal}{\emph{International Journal of Geographical
  Information Science}} \bibinfo{volume}{25}, \bibinfo{number}{2}
  (\bibinfo{year}{2011}), \bibinfo{pages}{293--322}.
\newblock


\bibitem[Kuijpers et~al\mbox{.}(2017)]%
        {kuijpers2017kinetic}
\bibfield{author}{\bibinfo{person}{B. Kuijpers}, \bibinfo{person}{H.~J.
  Miller}, {and} \bibinfo{person}{W. Othman}.} \bibinfo{year}{2017}\natexlab{}.
\newblock \showarticletitle{Kinetic prisms: incorporating acceleration limits
  into space--time prisms}.
\newblock \bibinfo{journal}{\emph{IJGIS}} \bibinfo{volume}{31},
  \bibinfo{number}{11} (\bibinfo{year}{2017}), \bibinfo{pages}{2164--2194}.
\newblock


\bibitem[Kuijpers and Othman(2009)]%
        {kuijpers2009modeling}
\bibfield{author}{\bibinfo{person}{Bart Kuijpers} {and} \bibinfo{person}{Walied
  Othman}.} \bibinfo{year}{2009}\natexlab{}.
\newblock \showarticletitle{Modeling uncertainty of moving objects on road
  networks via space--time prisms}.
\newblock \bibinfo{journal}{\emph{IJGIS}} \bibinfo{volume}{23},
  \bibinfo{number}{9} (\bibinfo{year}{2009}), \bibinfo{pages}{1095--1117}.
\newblock


\bibitem[Lou et~al\mbox{.}(2009)]%
        {lou2009map}
\bibfield{author}{\bibinfo{person}{Yin Lou}, \bibinfo{person}{Chengyang Zhang},
  \bibinfo{person}{Yu Zheng}, \bibinfo{person}{Xing Xie}, \bibinfo{person}{Wei
  Wang}, {and} \bibinfo{person}{Yan Huang}.} \bibinfo{year}{2009}\natexlab{}.
\newblock \showarticletitle{Map-matching for low-sampling-rate GPS
  trajectories}. In \bibinfo{booktitle}{\emph{Proceedings of the 17th ACM
  SIGSPATIAL international conference on advances in geographic information
  systems}}. \bibinfo{pages}{352--361}.
\newblock


\bibitem[Miller(1991)]%
        {miller1991modelling}
\bibfield{author}{\bibinfo{person}{H.~J. Miller}.}
  \bibinfo{year}{1991}\natexlab{}.
\newblock \showarticletitle{Modelling accessibility using space-time prism
  concepts within geographical information systems}.
\newblock \bibinfo{journal}{\emph{International Journal of Geographical
  Information System}} \bibinfo{volume}{5}, \bibinfo{number}{3}
  (\bibinfo{year}{1991}), \bibinfo{pages}{287--301}.
\newblock


\bibitem[Patel et~al\mbox{.}(2004)]%
        {patel2004stripes}
\bibfield{author}{\bibinfo{person}{Jignesh~M Patel}, \bibinfo{person}{Yun
  Chen}, {and} \bibinfo{person}{V~Prasad Chakka}.}
  \bibinfo{year}{2004}\natexlab{}.
\newblock \showarticletitle{STRIPES: an efficient index for predicted
  trajectories}. In \bibinfo{booktitle}{\emph{Proceedings of the 2004 ACM
  SIGMOD international conference on Management of data}}.
  \bibinfo{pages}{635--646}.
\newblock


\bibitem[Sharma and Shekhar(2022)]%
        {sharma2022analyzing}
\bibfield{author}{\bibinfo{person}{Arun Sharma} {and} \bibinfo{person}{Shashi
  Shekhar}.} \bibinfo{year}{2022}\natexlab{}.
\newblock \showarticletitle{Analyzing Trajectory Gaps to Find Possible
  Rendezvous Region}.
\newblock \bibinfo{journal}{\emph{ACM Transactions on Intelligent Systems and
  Technology (TIST)}} \bibinfo{volume}{13}, \bibinfo{number}{3}
  (\bibinfo{year}{2022}), \bibinfo{pages}{1--23}.
\newblock


\bibitem[Trajcevski(2003)]%
        {trajcevski2003probabilistic}
\bibfield{author}{\bibinfo{person}{Goce Trajcevski}.}
  \bibinfo{year}{2003}\natexlab{}.
\newblock \showarticletitle{Probabilistic range queries in moving objects
  databases with uncertainty}. In \bibinfo{booktitle}{\emph{Proceedings of the
  3rd ACM international workshop on Data engineering for wireless and mobile
  access}}. \bibinfo{pages}{39--45}.
\newblock


\bibitem[Trajcevski et~al\mbox{.}(2010)]%
        {trajcevski2010uncertain}
\bibfield{author}{\bibinfo{person}{Goce Trajcevski} {et~al\mbox{.}}}
  \bibinfo{year}{2010}\natexlab{}.
\newblock \showarticletitle{Uncertain range queries for necklaces}. In
  \bibinfo{booktitle}{\emph{2010 Eleventh International Conference on Mobile
  Data Management}}. IEEE, \bibinfo{pages}{199--208}.
\newblock


\bibitem[Trajcevski et~al\mbox{.}(2004)]%
        {trajcevski2004managing}
\bibfield{author}{\bibinfo{person}{Goce Trajcevski}, \bibinfo{person}{Ouri
  Wolfson}, \bibinfo{person}{Klaus Hinrichs}, {and} \bibinfo{person}{Sam
  Chamberlain}.} \bibinfo{year}{2004}\natexlab{}.
\newblock \showarticletitle{Managing uncertainty in moving objects databases}.
\newblock \bibinfo{journal}{\emph{ACM Transactions on Database Systems (TODS)}}
  \bibinfo{volume}{29}, \bibinfo{number}{3} (\bibinfo{year}{2004}),
  \bibinfo{pages}{463--507}.
\newblock


\bibitem[Uddin et~al\mbox{.}(2017)]%
        {uddin2017assembly}
\bibfield{author}{\bibinfo{person}{Reaz Uddin}, \bibinfo{person}{Michael~N
  Rice}, \bibinfo{person}{Chinya~V Ravishankar}, {and}
  \bibinfo{person}{Vassilis~J Tsotras}.} \bibinfo{year}{2017}\natexlab{}.
\newblock \showarticletitle{Assembly queries: Planning and discovering
  assemblies of moving objects using partial information}. In
  \bibinfo{booktitle}{\emph{Proceedings of the 25th ACM SIGSPATIAL
  International Conference on Advances in Geographic Information Systems}}.
  \bibinfo{pages}{1--10}.
\newblock


\bibitem[Yuan et~al\mbox{.}(2010)]%
        {yuan2010interactive}
\bibfield{author}{\bibinfo{person}{Jing Yuan}, \bibinfo{person}{Yu Zheng},
  \bibinfo{person}{Chengyang Zhang}, \bibinfo{person}{Xing Xie}, {and}
  \bibinfo{person}{Guang-Zhong Sun}.} \bibinfo{year}{2010}\natexlab{}.
\newblock \showarticletitle{An interactive-voting based map matching
  algorithm}. In \bibinfo{booktitle}{\emph{2010 Eleventh international
  conference on mobile data management}}. IEEE, \bibinfo{pages}{43--52}.
\newblock


\bibitem[Zhao et~al\mbox{.}(2008)]%
        {zhao2008algorithm}
\bibfield{author}{\bibinfo{person}{Liang Zhao}, \bibinfo{person}{Tatsuya
  Ohshima}, {and} \bibinfo{person}{Hiroshi Nagamochi}.}
  \bibinfo{year}{2008}\natexlab{}.
\newblock \showarticletitle{A* algorithm for the time-dependent shortest path
  problem}. In \bibinfo{booktitle}{\emph{WAAC08: The 11th Japan-Korea Joint
  Workshop on Algorithms and Computation}}.
\newblock


\bibitem[Zheng et~al\mbox{.}(2011)]%
        {zheng2011probabilistic}
\bibfield{author}{\bibinfo{person}{Kai Zheng}, \bibinfo{person}{Goce
  Trajcevski}, \bibinfo{person}{Xiaofang Zhou}, {and} \bibinfo{person}{Peter
  Scheuermann}.} \bibinfo{year}{2011}\natexlab{}.
\newblock \showarticletitle{Probabilistic range queries for uncertain
  trajectories on road networks}. In \bibinfo{booktitle}{\emph{Proceedings of
  the 14th International Conference on Extending Database Technology}}.
  \bibinfo{pages}{283--294}.
\newblock


\bibitem[Zheng et~al\mbox{.}(2012)]%
        {zheng2012reducing}
\bibfield{author}{\bibinfo{person}{Kai Zheng}, \bibinfo{person}{Yu Zheng},
  \bibinfo{person}{Xing Xie}, {and} \bibinfo{person}{Xiaofang Zhou}.}
  \bibinfo{year}{2012}\natexlab{}.
\newblock \showarticletitle{Reducing uncertainty of low-sampling-rate
  trajectories}. In \bibinfo{booktitle}{\emph{2012 IEEE 28th ICDE}}. IEEE,
  \bibinfo{pages}{1144--1155}.
\newblock


\bibitem[Zheng(2015)]%
        {zheng2015trajectory}
\bibfield{author}{\bibinfo{person}{Y. Zheng}.} \bibinfo{year}{2015}\natexlab{}.
\newblock \showarticletitle{Trajectory data mining: an overview}.
\newblock \bibinfo{journal}{\emph{ACM Transactions on Intelligent Systems and
  Technology (TIST)}} \bibinfo{volume}{6}, \bibinfo{number}{3}
  (\bibinfo{year}{2015}), \bibinfo{pages}{1--41}.
\newblock


\bibitem[Zheng et~al\mbox{.}(2010)]%
        {zheng2010geolife}
\bibfield{author}{\bibinfo{person}{Yu Zheng}, \bibinfo{person}{Xing Xie},
  \bibinfo{person}{Wei-Ying Ma}, {et~al\mbox{.}}}
  \bibinfo{year}{2010}\natexlab{}.
\newblock \showarticletitle{GeoLife: A collaborative social networking service
  among user, location and trajectory.}
\newblock \bibinfo{journal}{\emph{IEEE Data Eng. Bull.}} \bibinfo{volume}{33},
  \bibinfo{number}{2} (\bibinfo{year}{2010}), \bibinfo{pages}{32--39}.
\newblock


\end{thebibliography}

\end{document}